\documentclass{article}

\usepackage[preprint]{neurips_2024}

\usepackage{wrapfig}

\usepackage[utf8]{inputenc} 
\usepackage[T1]{fontenc}    
\usepackage{hyperref}       
\usepackage{url}            
\usepackage{booktabs}       
\usepackage{amsfonts}       
\usepackage{nicefrac}       
\usepackage{microtype}      
\usepackage{xcolor}         
\usepackage{graphicx}
\usepackage{amsmath}
\usepackage{algorithm}
\usepackage{algorithmic}
\usepackage[capitalize]{cleveref}
\usepackage{amsthm}
\usepackage{chngcntr}
\usepackage{caption}

\newtheorem{theorem}{Theorem}
\newtheorem{lemma}{Lemma}

\title{Shared Autonomy with IDA: Interventional Diffusion Assistance}

\author{Brandon J. McMahan$^1$,\ \  Zhenghao Peng$^1$,\ \ Bolei Zhou$^1$,\ \  Jonathan C. Kao$^1$ \\
$^1$University of California, Los Angeles \ \ \\ 
\texttt{bmcmahan2025@g.ucla.edu} \quad \texttt{pzh@cs.ucla.edu} \\
\texttt{bolei@cs.ucla.edu} \quad
\texttt{kao@seas.ucla.edu} \\
}

\begin{document}

\maketitle

\begin{abstract}
The rapid development of artificial intelligence (AI) has unearthed the potential to assist humans in controlling advanced technologies.
Shared autonomy (SA) facilitates control by combining inputs from a human pilot and an AI copilot.
In prior SA studies, the copilot is constantly active in determining the action played at each time step.
This limits human autonomy and may have deleterious effects on performance.
In general, the amount of helpful copilot assistance can vary greatly depending on the task dynamics.
We therefore hypothesize that human autonomy and SA performance improve through dynamic and selective copilot intervention.
To address this, we develop a goal-agnostic intervention assistance (IA) that dynamically shares control by having the copilot intervene only when the expected value of the copilot's action exceeds that of the human's action across all possible goals. 
We implement IA with a diffusion copilot (termed IDA) trained on expert demonstrations with goal masking. 
We prove a lower bound on the performance of IA that depends on pilot and copilot performance.
Experiments with simulated human pilots show that IDA achieves higher performance than pilot-only and traditional SA control in variants of the Reacher environment and Lunar Lander. 
We then demonstrate that IDA achieves better control in Lunar Lander with human-in-the-loop experiments. 
Human participants report greater autonomy with IDA and prefer IDA over pilot-only and traditional SA control. 
We attribute the success of IDA to preserving human autonomy while simultaneously offering assistance to prevent the human pilot from entering universally bad states. 
\end{abstract}

\section{Introduction}
As technology advances, humans continuously seek to operate more sophisticated and complex devices \citep{cascio2016technology}. 
However, more sophisticated technologies typically involve complicated operational dynamics and high-dimensional control systems that restrict their use to narrowly defined environments and highly specialized operators \citep{schulman2018highdimensional}. 
While fully autonomous AI agents can be trained to perform these tasks, this approach has three key limitations. 
First, the user's goal is internalized and not easily deducible in most real-world environments.
Second, removing the user from the control loop reduces their autonomy, potentially leading to poor performance and decreased engagement \citep{wilson2018collaborative, wilson2018human}. 
Third, as the capabilities of AI advance, it is important to consider how to create technologies that assist and empower humans instead of replacing them \citep{wilson2018collaborative, wilson2018human}.

Shared autonomy (SA) addresses these limitations by blending human (pilot) actions with assistive agent (copilot) actions in a closed-loop setting. 
Prior studies demonstrate SA can increase human task performance in robotic arm control~\citep{laghi2018shared}, drone control~\citep{reddy2018shared}, and navigation~\citep{peng2023learning}.
A critical component of prior work is an empirically tuned control-sharing hyperparameter that trades off copilot assistance with human user autonomy \citep{reddy2018shared, jeon2020shared, yoneda2023noise}. 
Excessive assistance can hinder goal achievement, while insufficient assistance can lead to poor control and performance.
Prior work has established multiple methods of control sharing, but limitations remain, such as requiring control sharing hyperparameters be tuned empirically or limiting the copilot to general, non-goal-specific assistance \citep{schaff2020residual, du2021ave, tan2022optimizing}. We will discuss them further in Section 2. 

Imagine driving a car with an assistive copilot. 
The majority of the time, the human driver should remain in control.
However, the copilot should intervene in certain scenarios to prevent collisions and ensure safe driving.
By dynamically adjusting the level of assistance based on the situation and needs of the driver, the copilot simultaneously preserves driver autonomy and engagement while ensuring safety.
This example reflects an important open problem: how do we construct an optimal behavior policy from the human and copilot's individual policies?
This problem is conceptually similar to the one posed by the  Teacher-Student Framework (TSF) \citep{Zimmer2014TeacherStudentFA}, where a teacher agent helps the student learn a good policy by intervening to prevent the student from visiting deleterious states and providing online guidance \citep{kelly2019hgdagger,li2022efficient,peng2021safe}. 
A recent work~\citep{xue2023guarded} has developed methods that improve the TSF framework by using a trajectory-based value estimate to decide when the teacher intervenes in student learning. 
We propose that a similar value estimate could be used to determine when the copilot should `intervene' in the user's control, providing the user with assistance. 
Therefore, we develop an intervention function that estimates the expected value of the copilot and human action in a goal-agnostic fashion. 
Because our formulation is goal-agnostic, it can generalize to goals not seen during the training process, endowing the human pilot with improved flexibility and generalization. 

Our main contribution is a shared control system that leverages a value-based intervention function that can interface with many different copilot architectures while simultaneously improving performance and preserving pilot autonomy.
In this work, we build on an existing diffusion-based copilot architecture \citep{yoneda2023noise}. 
These copilots are desirable because they can be trained using supervised methods which helps mitigate the sample complexity required to train effective copilots. 
While we proceed with the diffusion copilot of \citep{yoneda2023noise} we emphasize that our intervention function can be applied to any copilot capable of generating assistive, alternative, or corrective actions to aid the human in complex tasks. 

\begin{figure}
  \centering
  \includegraphics[width=\textwidth]{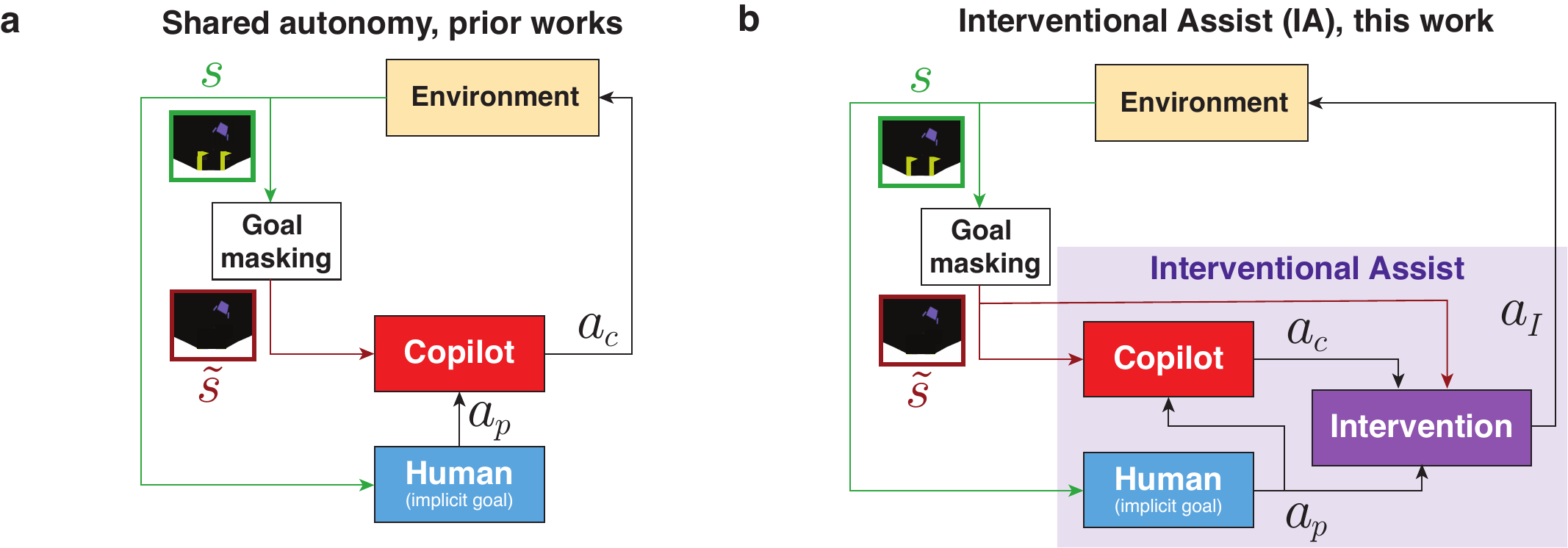}
  \vspace{-0.5cm}
  \caption{Overview of Interventional Assist Framework for control sharing.
  \textbf{(a)} Prior works perform shared autonomy by passing human actions to a copilot \citep{javdani2015shared,yoneda2023noise,reddy2018shared,jeon2020shared}.
  The copilot then plays an action, e.g., by selecting a feasible action closest to the user suggestion~\citep{reddy2018shared} or through diffusion~\citep{yoneda2023noise}.
  \textbf{(b)} In this work, we design an intervention function that plays either the human pilot's action, $a_p$, or the copilot's action, $a_c$, based on their goal-agnostic advantages.}
  \label{fig:fig1}
\end{figure}

\section{Related Work}
We have a brief discussion on three related works \citep{du2021ave,yoneda2023noise,tan2022optimizing}. 

\textbf{Assistance via Empowerment} \citep{du2021ave}: It proposes a method that increases a human's ability to control the environment and mitigate the need to infer any goals. 
It defines an information theoretic quantity that captures the number of future states accessible to a human from the current state.
An assistive agent is then trained to maximize this quantity while the human performs the task.
While this allows agents to assist in a goal-agnostic fashion, it typically leads to lower performance than methods that infer the goal, since the assistive agent does not directly help the human achieve a goal.
In contrast, other methods, including this work, leverage a copilot that implicitly infers the goal from human actions.
While goal inference can lead to lower performance when the goal is incorrectly inferred, we mitigate this by restricting the copilot to only intervene when the human action is deleterious, i.e., worse than the copilot action across all possible goals. 

\textbf{To the Noise and Back: Diffusion for Shared Autonomy} \citep{yoneda2023noise}: It develops a copilot that uses diffusion to map a human action closer to an expert's action.
They train a diffusion process to generate actions from a distribution of expert actions conditioned on goal-agnostic state observations.
At inference, the human action is first pushed towards a Gaussian distribution by adding noise in the forward diffusion process.
The reverse diffusion process is then run on this noised human action to transform it into a sample from the expert action distribution.
This action is played in the environment, as illustrated in Figure~\ref{fig:fig1}a.
The fraction of the forward diffusion process applied to the human action is the diffusion ratio $\gamma \in [0,1]$ and trades off action conformity (how similar the action is to expert actions) and human autonomy.
This copilot therefore requires an experimentally tuned hyperparameter to influence how much control the human has on the eventual action.
Because the amount of assistance needed may vary greatly depending on task dynamics, using a fixed amount of assistance throughout the task may limit performance and autonomy.
Our intervention function presented in Section 3 addresses this limitation by allowing the copilot to dynamically intervene based on the human's actions.

\textbf{On Optimizing Interventions in Shared Autonomy} \citep{tan2022optimizing}: It proposes a method where a copilot is trained with a penalty for intervention.
This encourages the copilot to limit its intervention and preserve human autonomy.
However, intervention is not inherently bad and need not be sparse.
Instead of uniformly penalizing all intervention, only unnecessary intervention should be penalized.
Additionally, this method is not inherently hyperparameter free as it does require setting a penalty hyperparameter to determine how the copilot should trade off assistance and autonomy, although they demonstrate this hyperparameter can be solved via optimization.
Another limitation is this method assumes access to the human policy during training so that the copilot can learn when best to intervene for a particular human (e.g., an expert at the task would likely have less copilot intervention than a novice).
In contrast, we define a general-purpose intervention function based on how good actions are in the environment (irrespective of the specific human pilot playing the actions), enabling one-time training of the intervention function.
In our intervention function, an expert would still experience less intervention than a novice because the expert generally plays better actions in the environment.
Empirically we find that with our approach the same intervention function can boost the performance of eight different human participants in the Lunar lander environment.

\section{Method}

Our goal is to introduce a general-purpose intervention function that increases the performance and human autonomy of an SA system.
The inputs to the intervention function are: (1) the goal-agnostic environment state, $\tilde{s}$, (2) a pilot action, $a_p$, and (3) a copilot action, $a_c$, illustrated in Figure~\ref{fig:fig1}b.
The intervention function then plays either the pilot action ($a_I = a_p$) or copilot action ($a_I = a_c$) in the environment.
We define this intervention function and describe its implementation.
We also prove a lower bound on the expected return associated with the policy using the intervention function proportional to the pilot and copilot performance. 

We develop an intervention assistance called Interventional Diffusion Assistance (IDA).
First, an expert policy is trained to maximize returns in the environment~(\cref{sec:train-expert}).
Second, the expert is used to perform policy rollouts and generate demonstrations.
Goal information is removed from the demonstration data.
We then train a diffusion copilot from these demonstrations~(\cref{sec:train-diffusion}).
Third, we define a trajectory-based intervention function that decides whether to play the human or copilot action~(\cref{sec:intervention}).
All training was performed on a workstation with a single 3080Ti and took approximately 48 hours to complete all three steps for our tasks.

\subsection{Notation and Problem Formulation.}

We assume the environment can be modeled as an infinite-horizon Markov Decision Process (MDP) defined by the tuple $M= \langle S, A, R, \gamma, P, d_0\rangle$.
$S$ is the space of all possible environment states, and $A$ is the space of all possible actions.
$R: S \times A \rightarrow [R_\text{min}, R_\text{max}]$ is a scalar reward received for playing an action $a \in A$ in state $s \in S$.
$P: S \times A \times S \rightarrow [0,1]$ are the transition dynamics for the environment, $\gamma$ is the discount factor, and $d_0$ is the distribution of initial states.
We define the state-action value function induced by policy $\pi$ to be 
$
Q^\pi(s,a)=\mathbb{E}_{s_{0}=s,a_0=a, a_{t} \sim \pi\left(\cdot \mid s_{t}\right), s_{t+1} \sim p\left(\cdot \mid s_{t}, a_{t}\right)}\left[\sum_{t=0}^{\infty} \gamma^{t} r\left(s_{t}, a_{t}\right)\right]
$, where $\pi: S \times A \to [0, 1]$ is the action distribution conditioned on the state.

We additionally introduce the notion of a goal that encodes the task objective or intention of the human. 
We can decompose any state $s = \langle \tilde{s}| \hat{g} \rangle$ into a partial goal-agnostic state observation $\tilde{s}$, which does not contain any goal specific information, and a goal $\hat{g} \in G^*$, where $G^*$ is the space of all possible goals.
Then $Q^{\pi}(s, a) = Q^{\pi}(\langle \tilde{s} \mid \hat{g} \rangle, a) = Q^{\pi}(\tilde{s}, a | \hat{g})$ is the state-action value function under the goal-agnostic state $\tilde{s}$ and goal $\hat{g}$.

We model the human behavior by a \textit{human pilot policy}, $\pi_p$, which generates human action $a_p\sim\pi_p(\cdot|s)$ according to the full state observation.
The human observes and therefore has access to the goal, $\hat{g}\in G^*$. 
However, the goal is not assumed to be accessible to the copilot.
In this paper, we assume the availability of an \textit{expert policy} $\pi_e(a_e |  s)$, that observes the full state and solves the environment.
We also assume that we can query the state-action value of the expert policy $Q^{\pi_e}(s, a)$, with which we use to evaluate the quality of an action $a$.
We will train a \textit{copilot policy} $\pi_c$ that generates actions based on the pilot action and the goal-agnostic state $a_c\sim \pi_c(\cdot| a_p, \tilde{s}) $. 
The ultimate goal of this paper is to derive a goal-agnostic intervention function $\mathbf T(\tilde{s}, a_c, a_p) \in \{0, 1\}$ from the expert policy so that the SA system can achieve better performance than the pilot alone.
The behavior policy that shares autonomy between the pilot and the copilot can be represented as $\pi_I = \mathbf{T} \pi_c + (1 - \mathbf{T}) \pi_p$.

\subsection{Training an Expert Policy}
\label{sec:train-expert}

We train a soft actor-critic (SAC) expert to solve the environment~\citep{haarnoja2018soft} because it allows us to (1) later query $Q^{\pi_e}(s,a)$ for intervention and (2) generate demonstrations in the environment that can be used to train the copilot.
In general, other methods of obtaining a $Q$-value estimator and for training a copilot are compatible with IA. 
We choose SAC for computational convenience. 
We parameterize our SAC model with a four-layer MLP with 256 units in each layer and the ReLU non-linearity. 
We use a learning rate of $3 \times 10^{-4}$ and a replay buffer size of $10^{6}$. 
The expert fully observes the environment including the goal, and is trained for 3 million time steps or until the environment is solved.
We found that training with exploring starts (randomized initial state) produced a more robust $Q$ function with better generalization abilities.
Without exploring starts, the $Q$ function performed poorly on unseen states, limiting the effectiveness of IDA.

\subsection{Training a Diffusion Copilot}
\label{sec:train-diffusion}

Following \citet{yoneda2023noise}, we trained a diffusion copilot $\pi_c(a_c | a_p, \tilde{s})$ using a denoising diffusion probabilistic model (DDPM)~\citep{ho2020denoising}. 
For each environment, we collected 10 million state-action pairs from episodes using the SAC expert. 
All goal information was removed from this demonstration data.
The copilot learned to denoise expert actions perturbed with Gaussian noise, conditioned on the goal-agnostic state $\tilde{s}$ and the pilot's action $a_p$. 

Formally, the forward diffusion process is a Markov chain that iteratively adds noise $\epsilon \sim \mathcal{N}(0, I)$ according to a noise schedule $\{\alpha_0, \alpha_1, ...,\alpha_T\}$ to an expert action $a_0$, via
\begin{equation}
    a_t = \sqrt{a_t}a_{t-1} + \sqrt{1-\alpha_{t-1}} \epsilon.
\end{equation}

Following the forward diffusion process, the diffusion copilot is then trained to predict the noise added by the forward process by minimizing the following loss \citep{ho2020denoising}:
\begin{equation}
\mathcal{L}_{\text{DDPM}} = \mathbb{E}_{t, \tilde{s} \sim \tau, \epsilon \sim \mathcal{N}(0, I)} \left[ \left\| \epsilon - \epsilon_\theta(a_{t}, \tilde{s}, t) \right\|^2 \right],
\end{equation}
where $\epsilon_\theta$ is a neural network parameterized by $\theta$ that approximates the noise $\epsilon$ conditioned on the noisy action $a_t$, the goal-agnostic state $\tilde{s}$, and the diffusion timestep $t$. $\tau$ is the distribution of states in the demonstration data. The reverse diffusion process is modeled by a four-layer MLP to iteratively refine $a_t$ toward $a_0$.

\subsection{Trajectory-based Goal-agnostic Value Intervention}
\label{sec:intervention}

IDA allows the copilot to intervene in pilot control when they take actions that are consistently bad for all possible goals.
We therefore play the copilot's action $a_c$ instead of the pilot's action $a_p$ when the copilot's action has a higher expected return under the expert $Q$-value, that is,
\begin{equation}
    Q^{\pi_e}(s, a_c) \geq Q^{\pi_e}(s, a_p).
    \label{eq:c_greater_a}
\end{equation}
However, we can not directly assess Equation~\ref{eq:c_greater_a}, since in a SA system the goal is internal to the pilot.
Instead, we only have access to the goal agnostic state $\tilde{s}$, such that, Equation~\ref{eq:c_greater_a} becomes,
\begin{equation}
    Q^{\pi_e}(\tilde{s}, a_c) \geq Q^{\pi_e}(\tilde{s}, a_p).
    \label{eq:goal_agnostic_Q}
\end{equation}
We can define an intervention score $\mathbf{I}(\tilde{s_t}, \tilde a_t | \hat g)$ which considers the value of $(\tilde{s_t}, \tilde{a_t})$ under the assumption that $\hat{g} \in G^*$ is the goal, where $G^*$ is the space of all possible goals,
\begin{equation}
    \mathbf{I}(\tilde{s_t}, \tilde{a_t} | \hat{g}) = Q^{\pi_e} (\tilde{s}_{t}, \tilde{a}_t | \hat{g}). 
    \label{eq:intervention_score}
\end{equation}

By marginalizing the difference in the intervention scores between the copilot and pilot over the entire goal space we can define a copilot advantage $A(\tilde{s}, a_c, a_p)$
\begin{equation}
    \mathbf{A}(\tilde{s}, a_c, a_p) = \frac{1}{G} \int_{\hat{g} \in G^*} F\bigg(\mathbf{I}(\tilde{s}, a_{c} | \hat{g}) - \mathbf{I}(\tilde{s}, a_{p} | \hat{g}) \bigg) d\hat{g},
    \label{eq:copilot_advantage}
\end{equation}
where $F$ is a function that maps the difference in intervention scores to $\{-1, +1\}$ to ensure all possible goals are weighted equally. 
Here we choose $F(\cdot) = \textrm{sign}(\cdot)$. 
$G$ is a normalization constant for the integral,
\begin{equation}
    G = \int_{\hat{g} \in G^*} \max_{\tilde{s}, a_c, a_p} F\bigg(\mathbf{I}(\tilde{s}, a_{c} | \hat{g}) - \mathbf{I}(\tilde{s}, a_{p} | \hat{g}) \bigg) d\hat{g}.
\end{equation}
When $F(\cdot) = \textrm{sign}(\cdot)$, then $A(\tilde{s}, a_c, a_p) \in [-1, +1]$ is proportional to the fraction of the goal space over which the copilot action is superior to the pilot action. 
Also, when $F$ is the sign function, the normalization constant reduces to $G=\int_{\hat{g} \in G^*} d\hat{g}$ and if the goal space is discrete then $G = |G^*|$ is the number of goals.

We adapt the value based intervention function proposed by \citet{xue2023guarded} to use for shared autonomy by allowing intervention to occur when $A(\tilde{s}, a_c, a_p) = 1$.
The copilot therefore intervenes when its action has a higher expected return compared to the pilot action for all possible goals. 
Formally, we let
\begin{equation}
    \mathbf T(\tilde{s}, a_c, a_p) = 
    \begin{cases}
        1 & \text{if } A(\tilde{s}, a_c, a_p) = 1 \\
        0 & \text{otherwise}
    \end{cases},
    \label{eq:IDA}
\end{equation}
with intervention policy, $\pi_I = \mathbf{T} \pi_c + (1 - \mathbf{T}) \pi_p$.
The process for performing Shared Autonomy (SA) with IA is highlighted in Algorithm 1.
The copilot advantage is computed at every timestep.
The behavioral (IA) policy is then determined by Equation~\ref{eq:IDA}.

\begin{algorithm}[t!]
\caption{SA with IDA}
\begin{algorithmic}[1]
  \STATE Initialize environment
  \FOR{each timestep in episode}
  \STATE Sample state $s$ and goal-masked state $\tilde{s}$ from environment.
  \STATE Sample human action $a_p \sim \pi_p(\cdot \mid s)$.
  \STATE Sample diffusion copilot action $a_c \sim \pi_c(\cdot | a_p, \tilde{s})$.
  \STATE Compute the copilot advantage score $A(\tilde{s}, a_c, a_p)$
    \IF{$A(\tilde{s}, a_c, a_p) = 1$}
      \STATE Play copilot action in environment, $a_I = a_c$.
    \ELSE
      \STATE Play pilot action in environment $a_I = a_p$.
    \ENDIF
  \ENDFOR
  \STATE environment reset
\end{algorithmic}
\end{algorithm}

\subsection{Theoretical Guarantees on the Performance of IA}
\label{sec:theory}

We prove that the return associated with IA is guaranteed to have the following safety and performance guarantees.

\begin{theorem}[]
\label{theorem:main-theorem}
Let $J(\pi) = \mathbf{E}_{s_0 \sim d_0, a_t \sim \pi(\cdot \mid s_t), s_{t+1} \sim P(\cdot \mid s_t, a_t)}[\sum_{t=0}^\infty \gamma^t r(s_t,a_t)]$  be the expected discounted return of following a policy $\pi$.
Then, the performance following the Interventional Assistance policy (or behavior policy) $\pi_I$ has the following guarantees:
\begin{enumerate}
    \item 
    For a near-optimal pilot, $(Q^{\pi_e}(s, a_p) \approx \max_{a^*} Q^{\pi_e}(s,a^*))$, $\pi_I$ is lower bounded by $\pi_p$:
    \begin{equation}
        J(\pi_{I}) \geq J(\pi_{p}).
        \nonumber
    \end{equation}
        \item 
        For a low performing pilot, $(Q^{\pi_e}(s, a_p) \approx \min_{a} Q^{\pi_e}(s,a))$, $\pi_I$ is low bounded by $\pi_c$:
        \begin{equation}
        J(\pi_{I}) \geq J(\pi_{c}).
        \nonumber
    \end{equation}
\end{enumerate}
\end{theorem}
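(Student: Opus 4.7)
The plan is to reduce both claims to a pointwise analysis of the intervention mask $\mathbf{T}$. Since $\pi_I = \mathbf{T}\pi_c + (1-\mathbf{T})\pi_p$, it suffices to show that under the Part 1 hypothesis $\mathbf{T} \equiv 0$ along every rollout (so $\pi_I \equiv \pi_p$ and $J(\pi_I) = J(\pi_p)$), and under the Part 2 hypothesis $\mathbf{T} \equiv 1$ (so $\pi_I \equiv \pi_c$ and $J(\pi_I) = J(\pi_c)$). Both lower bounds then hold with equality.

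First I would unpack the trigger. Since $F = \mathrm{sign}$, the advantage $\mathbf{A}(\tilde{s}, a_c, a_p)$ attains the value $1$ iff the integrand $F(\mathbf{I}(\tilde{s}, a_c \mid \hat{g}) - \mathbf{I}(\tilde{s}, a_p \mid \hat{g}))$ equals $+1$ for $\hat{g}$ in a set of full measure in $G^*$; equivalently, the copilot's action must strictly dominate the pilot's action at essentially every goal.

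For Part 1, at each visited $s = \langle \tilde{s} \mid \hat{g}_{\text{true}} \rangle$ the near-optimality hypothesis gives $Q^{\pi_e}(\tilde{s}, a_p \mid \hat{g}_{\text{true}}) \geq Q^{\pi_e}(\tilde{s}, a_c \mid \hat{g}_{\text{true}})$, so the integrand at $\hat{g}_{\text{true}}$ is non-positive. A mild positive-measure argument (an atom at $\hat{g}_{\text{true}}$ in a discrete $G^*$, or continuity of $Q^{\pi_e}$ in the goal coordinate yielding an open neighborhood in the continuous case) prevents the full-measure condition from holding, so $\mathbf{A} < 1$ and $\mathbf{T} = 0$ at every visited state; hence $\pi_I = \pi_p$ pointwise. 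For Part 2 the argument is symmetric: if $a_p$ is (near-)everywhere the $Q^{\pi_e}$-minimiser over the goal space and the copilot action lies strictly above the minimiser for almost every $\hat{g}$ (a mild non-pathology condition on $\pi_c$, which is trained on expert demonstrations), then the integrand is identically $+1$, so $\mathbf{A} = 1$, $\mathbf{T} \equiv 1$, and $\pi_I = \pi_c$ pointwise.

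The main obstacle will be formalising the informal $\approx$ in both hypotheses and handling the measure-zero subtlety when $G^*$ is continuous. The cleanest route is to interpret $\approx$ as strict equality (so near-optimal means $a_p \in \arg\max_{a} Q^{\pi_e}(s, a)$ and near-worst means $a_p \in \arg\min_{a} Q^{\pi_e}(s, a)$) and to augment Part 2 with a non-degeneracy assumption on $\pi_c$ across the whole goal space; with those in place the reductions above are immediate, and the bounds follow simply from $\pi_I$ coinciding pointwise with the appropriate single policy.
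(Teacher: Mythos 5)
Your proposal is correct in substance but takes a genuinely different route from the paper. You argue pointwise: under the near-optimal hypothesis the trigger $A(\tilde{s},a_c,a_p)=1$ can never fire (the true goal contributes a non-positive term to the sign-average), so $\mathbf{T}\equiv 0$ and $\pi_I$ coincides with $\pi_p$ on every visited state, giving $J(\pi_I)=J(\pi_p)$; symmetrically $\mathbf{T}\equiv 1$ gives $J(\pi_I)=J(\pi_c)$. The paper instead proves two quantitative intermediate theorems — $J(\pi_I)\ge J(\pi_p)-\beta R[\gamma/(1-\gamma)]^2\,\mathbb{E}\|\pi_c-\pi_p\|_1$ and $J(\pi_I)\ge J(\pi_c)-(1-\beta)R[\gamma/(1-\gamma)]^2\,\mathbb{E}\|\pi_c-\pi_p\|_1$ — built from the policy difference lemma, a state-visitation discrepancy bound adapted from Xue et al., and lemmas showing $\mathbb{E}_{a\sim\pi_I}Q^{\pi_e}(s,a)$ dominates both $\mathbb{E}_{a_p\sim\pi_p}Q^{\pi_e}(s,a_p)$ and $\mathbb{E}_{a_c\sim\pi_c}Q^{\pi_e}(s,a_c)$, and then obtains Theorem 1 by sending the weighted intervention rate $\beta\to 0$ or $\beta\to 1$. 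Your route is shorter and delivers equality rather than an inequality, while the paper's buys bounds that remain meaningful for intermediate pilots, degrading gracefully in $\beta$ and the $\pi_c$--$\pi_p$ discrepancy; it is also slightly more robust to the tie/measure-zero edge case you flag (if the copilot is exactly as good at the true goal, intervention may still fire, so $\pi_I$ need not literally equal $\pi_p$, yet the expected-$Q$ dominance lemma still goes through). Note that your Part 2 caveat — that dominance is needed across the whole goal space, not just at the true goal — is a real gap in the theorem's informal hypothesis, but the paper's own proof makes exactly the same leap (inferring $\mathbf{T}\approx 1$ from $Q^{\pi_e}(s,a_p)\le Q^{\pi_e}(s,a_c)$ at the true state), so your explicit strengthening matches, and arguably improves on, the paper's level of rigor.
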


The proof of Theorem 1 is in Appendix A.
Intuitively, the copilot will only intervene when the pilot attempts to play actions from the current state that have expected future returns less than that of the copilot's action across all possible goals.
The IA policy therefore does not degrade performance of a high-performing pilot, and when the pilot is poor, guarantees performance no worse than the copilot.

\section{Results}

\subsection{Experimental Setup}

\textbf{Baselines.}
In the experiments that follow, we compared three different control methods.
The first method is pilot-only control.
The second method is copilot control, where the behavior policy is equal to the copilot policy $\pi_c(a_c|a_p, \tilde{s})$.
Copilot control is the standard practice in SA \citep{reddy2018shared, yoneda2023noise, schaff2020residual, jeon2020shared} as it allows a copilot to improve the human action before it is played in the environment. 
For copilot control, the action played is the action generated by the diffusion copilot using a forward diffusion ratio of $\gamma=0.2$, the setting that obtained the best control in~\citep{yoneda2023noise}.
Our third method is IDA, which involves dynamically setting the behavior policy based on  Equation~\ref{eq:IDA}.

\textbf{Environments.}
The first environment we use is \textbf{Reacher}, a 2D simulation environment that models a two-jointed robotic arm with inertial physics.
In this environment, torques are applied to the two joints of the robotic arm to position the arm's fingertip at a randomly spawned goal position within the arm's plane of motion.
The state of the environment is an $11$ dimensional observation containing information about the position and velocities of the joints and goal location.
Rewards are given for making smooth trajectories that move the fingertip close to the goal.
In each episode, the arm's position is reset to a starting location and a new goal is sampled uniformly across the range of the arm's reach.
Following previous works (\citep{reddy2018shared, schaff2020residual, yoneda2023noise, tan2022optimizing}), we also use \textbf{Lunar Lander}, a 2D continuous control environment in which a rocket ship must be controlled with three thrusters to land at a desired goal location on the ground.
We modify the environment as described in \citep{yoneda2023noise} to make the landing location spawn randomly at different locations along the ground.
On each episode the landing zone is indicated by two flags.
The states are 9 dimensional observations of the environment containing information about the rocket ship's position, angular velocity, and goal landing zone.
We define the success rate as the fraction of episodes that ended with a successful landing between the landing zone flags.
We define the crash rate as the fraction of episodes that terminated due to a crash or flying out of bounds. 

\textbf{Pilots.} We use simulated surrogate pilots (Reacher, Lunar Lander) and eight human pilots (Lunar Lander) to benchmark the performance of pilot-only, copilot, and IDA control (see Appendix \ref{experiment_details} for details about human participants).
All human experiments were approved by the IRB and participants were compensated with a gift card.
Surrogate control policies are designed to reflect some suboptimalities in human control policies. 
We consider noisy and laggy surrogate control policies. 
Surrogate policies are constructed by drawing actions from either an expert policy or a corrupted policy.
We use a switch that controls if actions are drawn from the expert or corrupt policies.
Actions are initially sampled from the expert policy.
At every time step there is a probability of corruption being turned on.
Once corruption is turned on, actions are sampled from the corrupt control policy.
At every time step while corruption is on, there is a probability of turning corruption off.
Once corruption is turned off, actions are sampled again from the expert control policy. 
The noisy surrogate policy is constructed by sampling actions uniformly randomly with a 30\% corruption probability.
The laggy surrogate policy actions are drawn by repeating the action at the previous time step with an 85\% corruption probability.

\begin{table}
  \centering
  \begin{tabular}{llllllllll}
    \toprule
    \multicolumn{1}{c}{} & \multicolumn{3}{c}{Continuous} & \multicolumn{3}{c}{Linear} & \multicolumn{3}{c}{Quadrant} \\
    \cmidrule(r){2-4} 
    \cmidrule(r){5-7} 
    \cmidrule(r){8-10} 
    Framework & Expert    & Noisy   & Laggy  & Expert  & Noisy   & Laggy  & Expert & Noisy   & Laggy    \\
    \midrule
    Pilot-Only        & 18.8 & 1.6  & 8.5 & 17.7 & 2.07  & 8.94 & 19.05 & 2.07 & 8.38    \\
    Copilot           & 0    & 0     & 0   & 0 & 0     & 0 & 0     & 0    & 0.03    \\
    IDA (FG)          & 18.0    & 2.9  & 8.5   & 17.3 & 3.1  & 9.0 & 18.5 & 2.9 & 8.4   \\
    IDA (DS)          & 18.5    & 2.4  & 8.8  & 18.5 & 2.7  & \textbf{9.5} & 17.9 & 2.8 & 8.7   \\
    IDA               & \textbf{18.8} & \textbf{3.3}  & \textbf{8.7} & \textbf{20.0} & \textbf{3.1}  & 9.4 & \textbf{19.7} & \textbf{3.6} & \textbf{9.2}    \\
    \bottomrule
  \end{tabular}
  \vspace{0.2cm}
  \caption{
  Target hit rate (per minute) of surrogate pilots in Reacher environment. 
  ``Continuous`` uses the default environment goal space where targets spawn anywhere in the arm's plane of motion.
  ``Linear'' uses goals that are restricted to a line 100cm in front of the arm.
  ``Quadrant'' uses goals that are restricted to the top right quadrant of the workspace. 
  ``Copilot'' is the pilot with a diffusion copilot \citep{yoneda2023noise} with $\gamma=0.2$.
  ``IDA'' is the pilot with interventional diffusion assistance. 
  IDA (FG) is inferenced with faux goals obtained via monte Carlo sampling and IDA (DS) is inferenced with an expert $Q$ function trained on a different goal distribution of five discrete goals. }
  \label{tab-reacher}
\end{table}

\begin{figure}[b!]
  \centering
  \includegraphics[width=\textwidth]{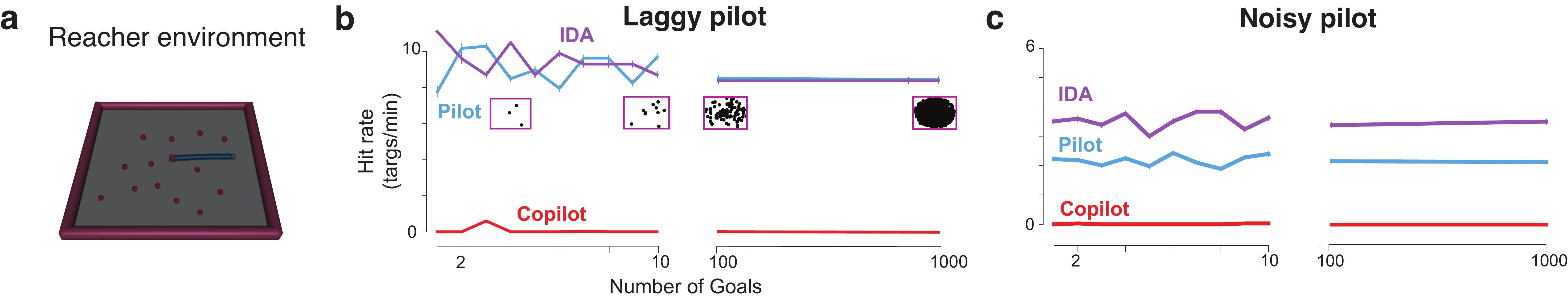}
  \vspace{-0.5cm}
  \caption{Reacher experiments. \textbf{(a)} Continuous Reacher environment.
  \textbf{(b)} Laggy pilot experiments as the number of possible goals varies.
  IDA performance slightly decreases as the number of possible goals increase, but it never significantly underperforms the pilot.
  The copilot is significantly worse than both the laggy pilot and IDA.
  \textbf{(c)} Noisy pilot experiments.
  IDA outperforms the pilot and copilot.}
  \label{fig:fig2}
\end{figure}

\subsection{Reacher Experiments}
We compared the performance of IDA to pilot-only and the copilot SA method of \citep{yoneda2023noise} in the Reacher environment with targets that could randomly appear anywhere (``Continuous'' in Table~\ref{tab-reacher}). 
We introduce two additional goal spaces to probe the generalization abilities of IDA: ``Linear'' and ``Quadrant.'' 
In the linear goal space, goals spawned uniformly random along a horizontal line located 100cm in front of the arm.
In the quadrant goal space, goals spawned uniformly random in the upper right quadrant of the workspace.
To use IDA without making the goal space known to the advantage computation (Equation \ref{eq:copilot_advantage}) we constructed a “faux goal” space (FG) by assuming a uniform distribution over potential next positions as goals.
We then estimated the copilot advantage through Monte-Carlo sampling. 
Furthermore, we examined IDA's performance when the goal space is unknown during $Q$ function training by using a domain shift (DS) environment where goals appear randomly at one of five locations during training.
In lieu of using humans, we employed laggy and noisy surrogate control policies to emulate imperfect pilots in the Reacher environment across these goal spaces. 

We evaluated performance by quantifying hit rate, the number of targets acquired per minute.
In the continuous goal space we found that IDA always achieved performance greater than or equal to pilot-only control and outperformed the copilot (Table~\ref{tab-reacher}).
The expert control policy was optimal, and IDA therefore rarely intervened with a copilot action, leading to similar performance.
We also found the laggy pilot performed relatively well because laggy actions do not significantly impair target reaches, although it may delay target acquisition.
When the policy was noisy, 
IDA improved hit rate from $1.6$ to $3.3$ targets per minute, approximately doubling performance.
In contrast, we found the copilot was unable to acquire any targets. 
This may be because the copilot is unable to infer the precise goal location from surrogate pilot actions.
Together, these results demonstrate that IDA is simultaneously capable of preserving optimal control for high performing pilots while improving hit rate for sub-optimal surrogate control policies, consistent with Theorem \ref{theorem:main-theorem}.

Additionally, we found that even without knowing the goal space during inference (IDA (FG)) or training (IDA (DS)), IDA still increased or maintained the hit rate of the noisy and laggy pilots in the continuous, linear, and quadrant goal spaces (Table \ref{tab-reacher}). 
We also investigated how the performance of IDA varies with goal space size when goals can appear anywhere (Figure~\ref{fig:fig2}a). 
For each evaluation, we randomly sampled a specified number of candidate goal positions (ranging from 1 to 10, as well as 100 and 1000) from the continuous goal space and then evaluated IDA with goals restricted to these sampled positions.
IDA consistently outperformed copilot and the noisy pilot while maintining the performance of the laggy pilot for all goal space sizes (Figure~\ref{fig:fig2}b, c).
Collectively, these results demonstrate that IDA performance is robust even when the true goal space is unknown.

\begin{table}[b!]
  \centering
  \begin{tabular}{lllllll}
    \toprule
    \multicolumn{1}{c}{} & \multicolumn{3}{c}{Success Rate} & \multicolumn{3}{c}{Crash Rate}                   \\
    \cmidrule(r){2-4} 
    \cmidrule(r){5-7}
    Shared Autonomy     & Expert   & Noisy  & Laggy  & Expert   & Noisy  & Laggy \\
    \midrule
    Pilot-only            & 100\%  & 21\% & 61\%  & 0\% & 76\%  & 39\%    \\
    Copilot (MLP)         & 92\%  & 54\% & 85\%  & 0\% & 1.6\%  & 0.7\%    \\
    Copilot (diffusion)   & 93.3\%  & 75\% & 92\%  & 0.3\% & 3\%  & 0.7\%  \\
    Intervention-Penalty (MLP)  & \textbf{100\%}  & 58\% & 76\%  & 0\% & 4\%  & 8\%  \\
    IA (MLP)              & 99\%  & \textbf{83\%} & 96\%  & 0\% & \textbf{0.7\%}  & \textbf{0\%}  \\
    IDA (diffusion)   & \textbf{100\%}  & \textbf{83\%} & \textbf{100\%}  & 0\% & 3.3\%  & \textbf{0\%}  \\
    \bottomrule
  \end{tabular}
  \vspace{0.2cm}
    \caption{Performance of surrogate pilots in the Lunar Lander environment. IDA is interventional assist with a diffusion copilot. IA (MLP) is interventional assist with an MLP copilot. Intervention-Penalty (MLP) refers to the intervention penalty approach proposed by \citet{tan2022optimizing}}.
    \label{tab:tab2}
\end{table}

\subsection{Lunar Lander}
We next evaluated the performance of IDA in Lunar Lander with the noisy and laggy surrogate control policies. 
Consistent with \citet{yoneda2023noise}, we modified Lunar Lander so that the landing zone appears randomly in one of nine different locations. 
IDA always achieved performance greater than or equal to the performance of the pilot-only control policy (Table~\ref{tab:tab2}). 
Under the expert policy, which achieves $100\%$ success rate and $0\%$ crash rate, IDA does not degrade its performance, although copilot does.
We observed that both copilot and IDA improved the performance of noisy and laggy surrogate control policies, with IDA consistently outperforming copilot in successful landings.
Copilots and IDA also reduced crash rates for surrogate pilots. 
Additionally, we compared the performance of IDA to the penalty-based intervention approach proposed by \citet{tan2022optimizing}.
Because the penalty-based intervention in \citet{tan2022optimizing} used an MLP, we compare it to both IDA and IA with an MLP based copilot. 
We found that IA consistently achieved a higher rate of successful landings for both the noisy and laggy surrogate pilots than penalty based intervention for both copilot architectures.
We further found that IA (MLP and IDA) yielded a lower crash rate than penalty-based intervention. 

\begin{figure}[t!]
  \centering
  \includegraphics[width=\textwidth]{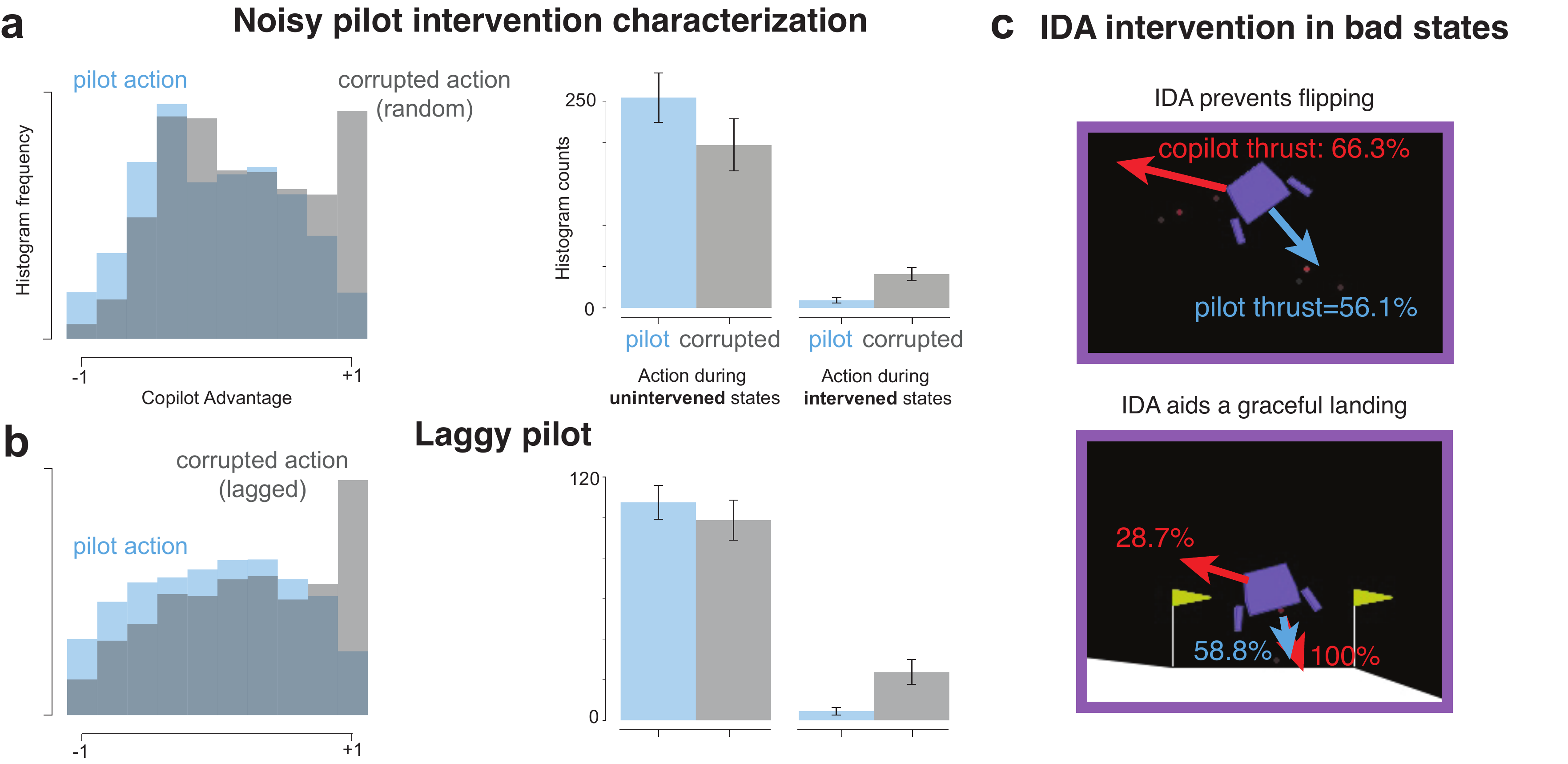}
  \vspace{-0.5cm}
  \caption{Analysis of copilot advantages during intervened states.
  \textbf{(a)} Characterization of intervention during the noisy pilot control.
  The left plot shows the copilot advantage, which is generally higher for corrupted (random) actions compared to pilot actions.
  When quantifying the number of intervened states, we see IDA intervenes more when the corrupted actions are taken.
  \textbf{(b)} Same as \textbf{(a)} but for the laggy pilot.
  \textbf{(c)} Example intervened states.
  In the top panel, the copilot prevents flipping.
  In the bottom panel, the copilot action helps to make a graceful landing.
  }
  \label{fig:intervention}
\end{figure}

Next we examined when and why copilot intervention occurred for surrogate pilots. 
Because these control policies were constructed by corrupting an expert's control policy, we were able to characterize intervention during periods of expert control versus corruption.
We examined the distribution of copilot-human advantage scores, which measures the fraction of the goal space over which the copilot's action has a higher expected return than the pilot's action.
For both the noisy and laggy pilots, we found the distribution of copilot advantage scores were different during expert actions vs corrupted actions (Figure~\ref{fig:intervention}a,b).
When corrupted actions were played, there were a greater number of states where the copilot advantage was equal to 1, indicating the copilot's action had a greater expected return over the entire goal space. 
Consistent with this, we see that intervention was more common during periods of corruption.

We also visualized example states of copilot intervention for both the noisy and laggy pilots (Figure~\ref{fig:intervention}c).
In two representative examples, we see that intervention generally occurs to stabilize and prevent the rocket ship from crashing or flipping over.
In Figure~\ref{fig:intervention}c (top), we observe that during intervention, the copilot applies lateral thrust to the rocket to prevent it from rotating sideways despite the pilot operator attempting to fire the main rocket, which would result in a destabilizing rotation of the rocket ship.
In Figure~\ref{fig:intervention}c (bottom), we observe that intervention occurred during landing where the copilot increased the amount of thrust on the main thruster to soften the touch down while simultaneously applying thrust on the right rocket to level the rocket ship for landing. 
In both instances, the copilot action prevented the rocket ship from entering a universally low-value state.

\subsection{Lunar Lander with Human-in-the-loop Control}
Given IDA improved the performance of surrogate pilots in Lunar Lander, we performed experiments with eight human participants.
Participants played Lunar Lander using pilot-only, copilot, or IDA.
Participants used a Logitech game controller with two joysticks to control the rocket ship.
The left joystick controlled the lateral thrusters and the right joystick controlled the main thruster.
Each participant performed three sequences of 3 experimental blocks (pilot, copilot, IDA) for a total of 9 blocks (see Appendix \ref{experiment_details} for experiment block details).
Each block consisted of 30 trials (episodes).
Participants were blind to what block they were playing.
The game was rendered at 50 fps.

\begin{table}[b!]
  \centering
  \begin{tabular}{lllll}
    \toprule
    \multicolumn{1}{c}{} & \multicolumn{3}{c}{Human-in-the-Loop Lunar Lander} \\
    \cmidrule(r){1-5}
    \multicolumn{1}{c}{} & \multicolumn{1}{c}{Success Rate} & \multicolumn{1}{c}{Crash Rate} & \multicolumn{1}{c}{Timeout Rate}   & \multicolumn{1}{c}{Out of Goal Landing}        \\
    \cmidrule(r){2-2} 
    \cmidrule(r){3-3}
    \cmidrule(r){4-4}
    \cmidrule(r){5-5}

    Human Pilot        & 14.0 (16.8)  \%  & 85.8 (16.7) \% & \textbf{0.0} (0.0) & 0.2 (0.6) \% \\
    w/ Copilot     & 68.2 (9.1) \%  & \textbf{3.6} (2.4)  \% & 13.8 (5.5)   & 14.4 (3.6)\% \\
    w/ IDA         & \textbf{91.7} (4.9) \% & 8.2 (5.1)  \% & \textbf{0.1} (0.4) & \textbf{0.0} (0.0)   \% \\
    \bottomrule
  \end{tabular}
    \vspace{0.2cm}
    \caption{Results of human pilots playing Lunar Lander.
    Mean and (standard error of the mean) presented for each metric. 
    }
    \label{tab:human}
    \vspace{-1cm}
\end{table}

Human pilots experienced considerable difficulty playing Lunar Lander, successfully landing at the goal locations only $14\%$ of the time (Table~\ref{tab:human}).
Copilot control allowed humans participants to successfully land the rocket ship at the goal locations $68.2\%$ of the time.
However, the copilot also frequently prevented any landing, resulting in a timeout in $13.8\%$ of trials ($0\%$ in pilot only and $0.1\%$ in IDA).
While this led to a lower crash rate ($3.6\%$), it also reduced user autonomy and overall lowered the success rate.
In contrast, IDA achieved the highest performance, enabling participants to successfully land at the goal location $91.7\%$ of the time which was significantly higher than pilot ($p < 0.01$, Wilcoxon signed-rank) and copilot ($p < 0.01$, Wilcoxon signed-rank) control. 
IDA also resulted in a significant reduction of crash rate when compared to human pilot control ($p < 0.01$, Wilcoxon signed-rank).

To quantify the level of ease, control, and autonomy human participants felt in each block (pilot only, copilot, IDA), we asked participants to provide a subjective rating in response to multiple questions. 
We asked participants to rate which blocks felt easiest, most controllable, and most autonomous.
To assess \textbf{ease of control}, we asked: ``How easy was the task of landing the rocket ship at the goal location?''
To assess \textbf{how much control} the participants had, we asked:
``How in control did you feel when performing this task?''
However, participants may not have felt much control even in pilot-only mode, so to assess \textbf{autonomy}, we asked:
``How much do you believe your inputs affected the trajectory of the rocketship?''
All questions were answered on a scale of $1$ to $5$ with 5 indicating the easiest, most in control, or most autonomy, respectively.
We found that humans subjectively prefer IDA to baseline copilot assistance in terms of ease of use, controllability, and preserving autonomy ($p < 0.01$, Wilcoxon signed-rank).

\begin{wrapfigure}{r}{0.4\textwidth}
  \centering
  \includegraphics[width=\linewidth]{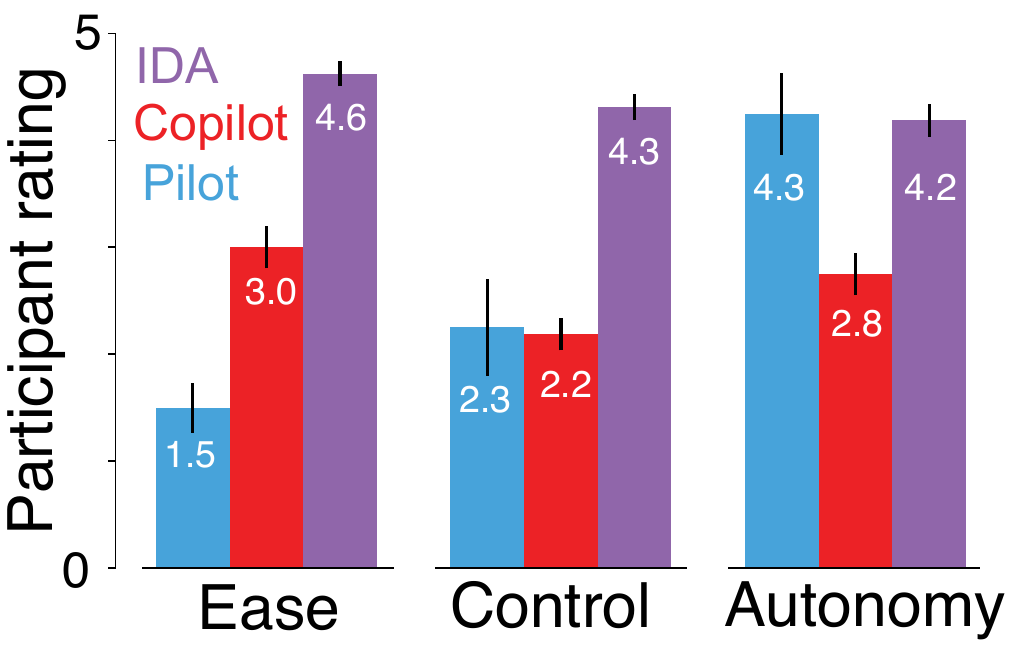}
  \caption{Participants rated IDA as the easiest.
  Participants subjectively rated IDA as achieving a similar level of autonomy to pilot only control but significantly better than copilot control. }
  \vspace{-0.5cm}
  \label{fig:subjective}
\end{wrapfigure}

\section{Conclusion and Discussion}
Our primary contribution is Interventional Assistance (IA): a hyperparameter-free and modular framework that plays a copilot action when it is better than the pilot action across all possible goals.
We find that IA outperforms previous methods for intervention based shared autonomy proposed by \citet{tan2022optimizing} as well as traditional copilot-only based methods for control sharing \citep{yoneda2023noise, reddy2018shared}.
Furthermore, we empirically demonstrated IDA (IA with a \textbf{D}iffusion copilot) improves both objective task performance and subjective satisfaction with real human pilots in Lunar Lander (Figure~\ref{fig:subjective}). 
While prior SA systems may degrade pilot performance, particularly when copilots incorrectly infer the pilot's goal \citep{tan2022optimizing, du2021ave}, IA does not degrade human performance (Theorem 1) and often improves it.

One limitation of our approach is that we must train an autonomous agent in a simulated environment to obtain an expert $Q$ function. 
However, this is not a fundamental requirement to learn an intervention function.  
A straightforward extension of the current work may use an offline dataset of expert demonstrations to train an ensemble of $Q$-networks \citep{chen2021randomized}.
In general, while IA requires access to an expert $Q$ function it makes no assumptions about how that $Q$ function is obtained and we leave various methods of obtaining a $Q$ function as directions for future work. 

Additionaly, IDA demonstrated resilience across changes
in goal space and can be easily adapted to real world settings where the goal space is unknown by
construction of these faux goal spaces.
Of course, in many settings task structure can be leveraged to further constrain the goal space and
improve the assistance IA is able to provide.
In these settings, another direction for future work is an implementation of IA that leverages a belief system to differentially weight candidate goals. 
Future work could potentially improve IA by removing unlikely goals from the advantage computation.

\paragraph{Acknowledgments} This project was supported by NSF RI-2339769 (BZ), NSF CPS-2344955 (BZ), the Amazon UCLA Science Hub (JCK), and NIH DP2NS122037 (JCK).

\newpage
{\small
\bibliographystyle{ieeenat_fullname}
\bibliography{cite}

\begin{thebibliography}{22}
\providecommand{\natexlab}[1]{#1}
\providecommand{\url}[1]{\texttt{#1}}
\expandafter\ifx\csname urlstyle\endcsname\relax
  \providecommand{\doi}[1]{doi: #1}\else
  \providecommand{\doi}{doi: \begingroup \urlstyle{rm}\Url}\fi

\bibitem[Cascio and Montealegre(2016)]{cascio2016technology}
Wayne~F Cascio and Ramiro Montealegre.
\newblock How technology is changing work and organizations.
\newblock \emph{Annual review of organizational psychology and organizational behavior}, 3\penalty0 (1):\penalty0 349--375, 2016.

\bibitem[Chen et~al.(2021)Chen, Wang, Zhou, and Ross]{chen2021randomized}
Xinyue Chen, Che Wang, Zijian Zhou, and Keith Ross.
\newblock Randomized ensembled double q-learning: Learning fast without a model, 2021.

\bibitem[Du et~al.(2021)Du, Tiomkin, Kiciman, Polani, Abbeel, and Dragan]{du2021ave}
Yuqing Du, Stas Tiomkin, Emre Kiciman, Daniel Polani, Pieter Abbeel, and Anca Dragan.
\newblock Ave: Assistance via empowerment, 2021.

\bibitem[Haarnoja et~al.(2018)Haarnoja, Zhou, Hartikainen, Tucker, Ha, Tan, Kumar, Zhu, Gupta, Abbeel, et~al.]{haarnoja2018soft}
Tuomas Haarnoja, Aurick Zhou, Kristian Hartikainen, George Tucker, Sehoon Ha, Jie Tan, Vikash Kumar, Henry Zhu, Abhishek Gupta, Pieter Abbeel, et~al.
\newblock Soft actor-critic algorithms and applications.
\newblock \emph{arXiv preprint arXiv:1812.05905}, 2018.

\bibitem[Ho et~al.(2020)Ho, Jain, and Abbeel]{ho2020denoising}
Jonathan Ho, Ajay Jain, and Pieter Abbeel.
\newblock Denoising diffusion probabilistic models.
\newblock \emph{Advances in neural information processing systems}, 33:\penalty0 6840--6851, 2020.

\bibitem[Javdani et~al.(2015)Javdani, Srinivasa, and Bagnell]{javdani2015shared}
Shervin Javdani, Siddhartha~S Srinivasa, and J~Andrew Bagnell.
\newblock Shared autonomy via hindsight optimization.
\newblock \emph{Robotics science and systems: online proceedings}, 2015, 2015.

\bibitem[Jeon et~al.(2020)Jeon, Losey, and Sadigh]{jeon2020shared}
Hong~Jun Jeon, Dylan~P Losey, and Dorsa Sadigh.
\newblock Shared autonomy with learned latent actions.
\newblock \emph{arXiv preprint arXiv:2005.03210}, 2020.

\bibitem[Kelly et~al.(2019)Kelly, Sidrane, Driggs-Campbell, and Kochenderfer]{kelly2019hgdagger}
Michael Kelly, Chelsea Sidrane, Katherine Driggs-Campbell, and Mykel~J Kochenderfer.
\newblock Hg-dagger: Interactive imitation learning with human experts.
\newblock In \emph{2019 International Conference on Robotics and Automation (ICRA)}, pages 8077--8083. IEEE, 2019.

\bibitem[Laghi et~al.(2018)Laghi, Maimeri, Marchand, Leparoux, Catalano, Ajoudani, and Bicchi]{laghi2018shared}
Marco Laghi, Michele Maimeri, Mathieu Marchand, Clara Leparoux, Manuel Catalano, Arash Ajoudani, and Antonio Bicchi.
\newblock Shared-autonomy control for intuitive bimanual tele-manipulation.
\newblock In \emph{2018 IEEE-RAS 18th International Conference on Humanoid Robots (Humanoids)}, pages 1--9. IEEE, 2018.

\bibitem[Li et~al.(2022)Li, Peng, and Zhou]{li2022efficient}
Quanyi Li, Zhenghao Peng, and Bolei Zhou.
\newblock Efficient learning of safe driving policy via human-ai copilot optimization.
\newblock In \emph{International Conference on Learning Representations}, 2022.

\bibitem[Peng et~al.(2021)Peng, Li, Liu, and Zhou]{peng2021safe}
Zhenghao Peng, Quanyi Li, Chunxiao Liu, and Bolei Zhou.
\newblock Safe driving via expert guided policy optimization.
\newblock In \emph{5th Annual Conference on Robot Learning}, 2021.

\bibitem[Peng et~al.(2023)Peng, Mo, Duan, Li, and Zhou]{peng2023learning}
Zhenghao Peng, Wenjie Mo, Chenda Duan, Quanyi Li, and Bolei Zhou.
\newblock Learning from active human involvement through proxy value propagation.
\newblock \emph{Advances in Neural Information Processing Systems}, 2023.

\bibitem[Reddy et~al.(2018)Reddy, Dragan, and Levine]{reddy2018shared}
Siddharth Reddy, Anca~D. Dragan, and Sergey Levine.
\newblock Shared autonomy via deep reinforcement learning, 2018.

\bibitem[Schaff and Walter(2020)]{schaff2020residual}
Charles Schaff and Matthew~R. Walter.
\newblock Residual policy learning for shared autonomy, 2020.

\bibitem[Schulman et~al.(2015)Schulman, Levine, Abbeel, Jordan, and Moritz]{schulman2015trust}
John Schulman, Sergey Levine, Pieter Abbeel, Michael~I. Jordan, and Philipp Moritz.
\newblock Trust region policy optimization.
\newblock In \emph{{ICML}}, pages 1889--1897. JMLR.org, 2015.

\bibitem[Schulman et~al.(2018)Schulman, Moritz, Levine, Jordan, and Abbeel]{schulman2018highdimensional}
John Schulman, Philipp Moritz, Sergey Levine, Michael Jordan, and Pieter Abbeel.
\newblock High-dimensional continuous control using generalized advantage estimation, 2018.

\bibitem[Tan et~al.(2022)Tan, Koleczek, Pradhan, Perello, Chettiar, Rohra, Rajaram, Srinivasan, Hossain, and Chandak]{tan2022optimizing}
Weihao Tan, David Koleczek, Siddhant Pradhan, Nicholas Perello, Vivek Chettiar, Vishal Rohra, Aaslesha Rajaram, Soundararajan Srinivasan, H~M~Sajjad Hossain, and Yash Chandak.
\newblock On optimizing interventions in shared autonomy, 2022.

\bibitem[Wilson and Daugherty(2018{\natexlab{a}})]{wilson2018collaborative}
H.~J. Wilson and P.~R. Daugherty.
\newblock Collaborative intelligence: Humans and ai are joining forces.
\newblock \emph{Harvard Business Review}, 2018{\natexlab{a}}.

\bibitem[Wilson and Daugherty(2018{\natexlab{b}})]{wilson2018human}
H.~J. Wilson and P.~R. Daugherty.
\newblock \emph{Human Machine: Reimagining Work in the Age of AI}.
\newblock Harvard Business Review Press, 2018{\natexlab{b}}.

\bibitem[Xue et~al.(2023)Xue, Peng, Li, Liu, and Zhou]{xue2023guarded}
Zhenghai Xue, Zhenghao Peng, Quanyi Li, Zhihan Liu, and Bolei Zhou.
\newblock Guarded policy optimization with imperfect online demonstrations, 2023.

\bibitem[Yoneda et~al.(2023)Yoneda, Sun, , Yang, Stadie, and Walter]{yoneda2023noise}
Takuma Yoneda, Luzhe Sun, , Ge Yang, Bradly Stadie, and Matthew Walter.
\newblock To the noise and back: Diffusion for shared autonomy, 2023.

\bibitem[Zimmer et~al.(2014)Zimmer, Viappiani, and Weng]{Zimmer2014TeacherStudentFA}
Matthieu Zimmer, Paolo Viappiani, and Paul Weng.
\newblock Teacher-student framework: a reinforcement learning approach.
\newblock 2014.

\end{thebibliography}
}

\newpage

\appendix
\renewcommand{\thefigure}{A.\arabic{figure}}
\counterwithin{figure}{section}
\renewcommand{\thetable}{A.\arabic{table}}
\counterwithin{table}{section}

\section{Proof of Theorem 1}
\label{appendix-sec:theory}

In this section, we will prove the theoretical guarantees on the performance of the interventional assistance (IA) in Theorem~\ref{theorem:main-theorem}. We first introduce several useful lemmas.

First, we introduce a lemma for later use by following the Theorem 3.2 in \citep{xue2023guarded}.
\begin{lemma}[]
\label{lemma:discrepency-between-piI-pic}
For any behavior policy $\pi_I$ deduced by a copilot policy $\pi_c$, a pilot policy $\pi_p$, and an intervention function $\mathbf{T}(s, a_p, a_c)$, the state distribution discrepancy between $\pi_I$ and $\pi_c$ is bounded by the policy discrepancy and intervention rate:
\begin{equation}
\left\|\tau_{\pi_I}- \tau_{\pi_c}\right\|_1\le
\frac{(1-\beta)\gamma}{1-\gamma} \mathbb{E}_{s \sim \tau_{\pi_I}} \left\|\pi_c(\cdot | s)- \pi_p(\cdot | s)\right\|_1, 
\end{equation}
where $\beta=\frac{
\mathbb{E}_{s \sim \tau_{\pi_I}, a_c\sim \pi_c, a_p\sim \pi_p}
\left\|
\mathbf{T}(s,a_p,a_c)
\left[\pi_{c}(a_c \mid s)-\pi_{p}(a_p \mid s)\right]\right\|_{1}}{
\mathbb{E}_{s \sim \tau_{\pi_I}, a_c\sim \pi_c, a_p\sim \pi_p}
\left\|\pi_{c}( a_c \mid s)-\pi_{p}(a_p \mid s)\right\|_{1}}$ is the \textit{weighted expected intervention rate}.
\(\tau_{\pi_I}\) and \(\tau_{\pi_c}\) are the corresponding state visitation distributions following $\pi_I$ and $\pi_c$, respectively.
\end{lemma}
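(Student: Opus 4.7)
The plan is to follow the two-stage argument of Theorem 3.2 in Xue et al.\ (2023) that the lemma statement explicitly adapts. Stage one is a simulation-lemma-style recursion that bounds $\|\tau_{\pi_I}-\tau_{\pi_c}\|_1$ by the averaged conditional policy discrepancy $\mathbb{E}_{s\sim\tau_{\pi_I}}\|\pi_I(\cdot\mid s)-\pi_c(\cdot\mid s)\|_1$. Stage two uses the mixture structure $\pi_I = \mathbf{T}\pi_c + (1-\mathbf{T})\pi_p$ to re-express this conditional discrepancy as the raw pilot--copilot discrepancy rescaled by the non-intervention factor $(1-\beta)$.

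For stage one, I would begin from the Bellman identity for the discounted state occupancy,
\begin{equation*}
\tau_\pi(s') = (1-\gamma)\,d_0(s') + \gamma \int \tau_\pi(s)\,\pi(a\mid s)\,P(s'\mid s, a)\, da\, ds,
\end{equation*}
write it for both $\pi_I$ and $\pi_c$, subtract, and add and subtract the cross term $\tau_{\pi_I}(s)\pi_c(a\mid s)$ inside the integrand. Taking $L^1$ norms and using that any Markov kernel is a contraction in $L^1$ yields the self-referential inequality $\|\tau_{\pi_I}-\tau_{\pi_c}\|_1 \le \gamma\,\mathbb{E}_{s\sim\tau_{\pi_I}}\|\pi_I(\cdot\mid s)-\pi_c(\cdot\mid s)\|_1 + \gamma\|\tau_{\pi_I}-\tau_{\pi_c}\|_1$, which I solve to expose the prefactor $\gamma/(1-\gamma)$.

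For stage two, I would couple the sampling mechanisms of $\pi_I$ and $\pi_c$ by reusing the same $a_c\sim\pi_c$ and $a_p\sim\pi_p$ draws. Under this coupling $\pi_I$ and $\pi_c$ disagree only on the non-intervention event $\{\mathbf{T}(s,a_p,a_c)=0\}$, and on that event the magnitude of disagreement at the sampled actions is exactly $|\pi_c(a_c\mid s)-\pi_p(a_p\mid s)|$. After taking expectations over $(s,a_p,a_c)$, the conditional discrepancy $\mathbb{E}_s\|\pi_I(\cdot\mid s)-\pi_c(\cdot\mid s)\|_1$ becomes the non-intervention portion $\mathbb{E}[(1-\mathbf{T})\,|\pi_c(a_c\mid s)-\pi_p(a_p\mid s)|]$ of the total policy discrepancy. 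Dividing and multiplying by $\mathbb{E}[|\pi_c(a_c\mid s)-\pi_p(a_p\mid s)|]$ (the denominator of $\beta$) exposes the factor $(1-\beta)$; bounding the remaining factor by $\mathbb{E}_s\|\pi_c(\cdot\mid s)-\pi_p(\cdot\mid s)\|_1$ (the action-integrated TV on the RHS of the lemma) finishes the argument.

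The main obstacle will be the conversion in stage two between the two different notions of policy discrepancy: the scalar absolute differences $|\pi_c(a_c\mid s)-\pi_p(a_p\mid s)|$ averaged over independent samples that enter the definition of $\beta$, and the action-integrated $L^1$ norm $\|\pi_c(\cdot\mid s)-\pi_p(\cdot\mid s)\|_1$ that appears on the RHS of the bound. Making the coupling argument precise requires careful bookkeeping of which samples are reused and which are freshly drawn, plus a Jensen-type inequality to translate between expectations of scalar differences and the integral-form TV. I would import the exact manipulation from the proof of Theorem 3.2 in Xue et al.\ (2023), after verifying that their intervention function $\mathbf{T}(s,a)$ and ours $\mathbf{T}(s,a_p,a_c)$ produce the same mixture structure modulo an additional expectation over the extra action argument.
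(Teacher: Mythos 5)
Your proposal follows essentially the same route as the paper: the paper likewise starts from the bound $\|\tau_{\pi_I}-\tau_{\pi_c}\|_1\le\frac{\gamma}{1-\gamma}\mathbb{E}_{s\sim\tau_{\pi_I}}\|\pi_I(\cdot|s)-\pi_c(\cdot|s)\|_1$ (cited directly from Theorem~3.2 of Xue et al.\ rather than rederived via the occupancy recursion), substitutes $\pi_I=\mathbf{T}\pi_c+(1-\mathbf{T})\pi_p$ to collapse the inner discrepancy to $(1-\mathbf{T})[\pi_p-\pi_c]$, and reads off the $(1-\beta)$ factor from the definition of $\beta$. The conversion you flag as the main obstacle --- reconciling the sampled scalar differences $|\pi_c(a_c|s)-\pi_p(a_p|s)|$ in $\beta$ with the action-integrated norm on the right-hand side --- is passed over silently in the paper's own four-line derivation, so your plan to make that step explicit is, if anything, more careful than the original.
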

\begin{proof}
We begin with the result of Theorem 3.2 in \citep{xue2023guarded},
\begin{equation}
\nonumber
    \begin{aligned}
\left\| \tau_{\pi_I}-\tau_{\pi_c}\right\|_1 &
\le \frac{\gamma}{1-\gamma} \mathbb{E}_{s \sim \tau_{\pi_I}}\left\|\pi_I(\cdot | s)-\pi_c(\cdot | s)\right\|_1\\
&=\frac{\gamma}{1-\gamma} \mathbb{E}_{s \sim \tau_{\pi_I}, a_c\sim \pi_c, a_p\sim \pi_p} \left\|
\mathbf{T}
\pi_c(a_c | s)+(1-\mathbf{T})\pi_p(a_p | s)-\pi_c(a_c | s)\right\|_1\\
&=\frac{\gamma}{1-\gamma} \mathbb{E}_{s \sim \tau_{\pi_I}, a_c\sim \pi_c(\cdot|s), a_p\sim \pi_p(\cdot|s)}\left\|
(1 - \mathbf{T}(s,a_p,a_c))
\left[\pi_p(\cdot | s)-\pi_c(\cdot | s)\right]\right\|_1\\
&=\frac{(1-\beta)\gamma}{1-\gamma} \mathbb{E}_{s \sim \tau_{\pi_I}} \left\|\pi_c(\cdot | s)- \pi_p(\cdot | s)\right\|_1.
\end{aligned}
\end{equation}
\end{proof}

To prove the theorem, the key lemma we use is the policy difference lemma in~\citep{schulman2015trust} introduced below. It introduces one policy's advantage function computed on states and actions sampled from another policy's generated trajectory. Here, the advantage function is defined as $A^{\pi'}(s, a) = Q^{\pi'}(s, a) - V^{\pi'}(s)$ and $V^{\pi'}(s) = \mathbb E_{a\sim \pi'}Q(s, a)$ is the state value function.
$J(\pi) = \mathbb{E}_{s_0 \sim d_0, a_t \sim \pi(\cdot \mid s_t), s_{t+1} \sim P(\cdot \mid s_t, a_t)}[\sum_{t=0}^\infty \gamma^t r(s_t,a_t)]$ is the expected return following policy $\pi$. 
\begin{lemma}[Policy difference lemma]
Let $\pi$ and $\pi'$ be two policies. The difference in expected returns can be represented as follows:
\label{appendix:policy-diff-lemma}
\begin{equation}
J(\pi) - J(\pi') = \mathbb{E}_{s_t, a_t\sim\tau_\pi}\left[\sum_{t=0}^{\infty} \gamma^{t} A^{\pi'}\left(s_{t}, a_{t}\right)\right]
\end{equation}
\end{lemma}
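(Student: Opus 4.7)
The plan is to start from the right-hand side, expand the advantage, and exploit a telescoping sum in $t$. Concretely, write $A^{\pi'}(s_t,a_t)=Q^{\pi'}(s_t,a_t)-V^{\pi'}(s_t)$, then use the Bellman identity $Q^{\pi'}(s_t,a_t)=r(s_t,a_t)+\gamma\,\mathbb{E}_{s_{t+1}\sim P(\cdot\mid s_t,a_t)}[V^{\pi'}(s_{t+1})]$. Substituting gives the per-timestep summand $\gamma^t\bigl(r(s_t,a_t)+\gamma V^{\pi'}(s_{t+1})-V^{\pi'}(s_t)\bigr)$ under the trajectory expectation $(s_t,a_t)\sim\tau_\pi$.

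Next, I would observe that the two $V^{\pi'}$ terms telescope across $t$: the $\gamma^{t+1}V^{\pi'}(s_{t+1})$ contribution at index $t$ exactly cancels the $-\gamma^{t+1}V^{\pi'}(s_{t+1})$ contribution at index $t+1$, provided we can interchange the infinite sum with the expectation. This interchange is justified by the boundedness of rewards (hence of $V^{\pi'}$ by $R_{\max}/(1-\gamma)$) together with $\gamma\in[0,1)$, which yields absolute convergence and lets us apply Fubini/Tonelli. After cancellation, only two pieces survive: the discounted reward series $\sum_{t=0}^\infty \gamma^t r(s_t,a_t)$ and a single boundary term $-V^{\pi'}(s_0)$.

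Taking expectation with respect to trajectories generated by $\pi$ (with $s_0\sim d_0$, $a_t\sim\pi(\cdot\mid s_t)$, $s_{t+1}\sim P(\cdot\mid s_t,a_t)$), the reward series yields exactly $J(\pi)$ by definition. For the boundary term, $s_0\sim d_0$ does not depend on the choice of policy, so $\mathbb{E}_{s_0\sim d_0}[V^{\pi'}(s_0)] = J(\pi')$. Combining these, the right-hand side equals $J(\pi)-J(\pi')$, which is the claim.

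The main subtlety, and the only real obstacle, is making the telescoping rigorous: one must verify that the expectation of the infinite sum equals the infinite sum of expectations, and that the per-timestep conditional expectation $\mathbb{E}[V^{\pi'}(s_{t+1})\mid s_t,a_t]$ under $\tau_\pi$ really does give $Q^{\pi'}(s_t,a_t)-r(s_t,a_t)$ (this uses that the transition kernel $P$ is the same regardless of which policy generates the actions, so the Bellman expansion of $Q^{\pi'}$ is consistent with expectations taken along $\tau_\pi$). Once these points are addressed, the identity follows mechanically from telescoping and the definition of $J$.
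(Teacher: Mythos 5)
Your telescoping argument is correct and is exactly the canonical proof of this lemma (due to Kakade--Langford and restated in Schulman et al.), which the paper does not reprove but simply imports by citation. The two points you flag as subtleties --- interchanging the infinite sum with the expectation via bounded rewards and $\gamma<1$, and the fact that the transition kernel $P$ is policy-independent so the Bellman expansion of $Q^{\pi'}$ is consistent with expectations along $\tau_\pi$ --- are precisely the right ones, so nothing is missing.
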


We introduce two lemmas that exploits the intervention function we proposed in \cref{sec:intervention}.
\begin{lemma}
\label{appendix:lemma-for-intervention}
The Q value of the behavior action under the expert's Q estimate is greater or equal to the Q value of the pilot action.
\begin{equation}
 \mathbb{E}_{a \sim {\pi_I}(\cdot|s)} Q^{\pi_e}(s, a)
 \ge
 \mathbb{E}_{a_p \sim {\pi_p(\cdot|s)}} Q^{\pi_e}(s, a_p)
\end{equation}
\end{lemma}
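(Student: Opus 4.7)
The plan is to unpack the behavior policy $\pi_I = \mathbf{T}\pi_c + (1-\mathbf{T})\pi_p$ directly inside the expectation and then use the definitional property of the intervention function $\mathbf{T}$: whenever it equals $1$, the copilot action dominates the pilot action in Q-value for \emph{every} possible goal, and in particular for the true goal hidden inside $s$.

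Concretely, I would first write
\begin{equation}
\mathbb{E}_{a \sim \pi_I(\cdot|s)} Q^{\pi_e}(s, a)
= \mathbb{E}_{a_p \sim \pi_p, a_c \sim \pi_c}\Big[\mathbf{T}(\tilde s, a_c, a_p)\, Q^{\pi_e}(s, a_c) + (1-\mathbf{T}(\tilde s, a_c, a_p))\, Q^{\pi_e}(s, a_p)\Big],
\nonumber
\end{equation}
using the fact that $a_c$ is generated from $\pi_c(\cdot\mid a_p,\tilde s)$ and then $\mathbf{T}$ deterministically selects between $a_p$ and $a_c$. Subtracting $\mathbb{E}_{a_p}Q^{\pi_e}(s, a_p)$ from both sides reduces the claim to showing
\begin{equation}
\mathbb{E}_{a_p, a_c}\Big[\mathbf{T}(\tilde s, a_c, a_p)\big(Q^{\pi_e}(s, a_c) - Q^{\pi_e}(s, a_p)\big)\Big] \ge 0.
\nonumber
\end{equation}

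Next, I would argue that the integrand is pointwise non-negative. When $\mathbf{T}(\tilde s, a_c, a_p) = 0$ the integrand vanishes, so it suffices to consider the event $\mathbf{T} = 1$. By the definition in Equation~\ref{eq:IDA}, this event is exactly $\mathbf{A}(\tilde s, a_c, a_p) = 1$. Since $F = \mathrm{sign}$ and the advantage is normalized to lie in $[-1, +1]$, having $\mathbf{A} = 1$ forces $\mathrm{sign}\big(\mathbf{I}(\tilde s, a_c|\hat g) - \mathbf{I}(\tilde s, a_p|\hat g)\big) \ge 0$ for almost every goal $\hat g \in G^*$, which by the definition of the intervention score (Equation~\ref{eq:intervention_score}) means $Q^{\pi_e}(\tilde s, a_c \mid \hat g) \ge Q^{\pi_e}(\tilde s, a_p \mid \hat g)$ for all such $\hat g$. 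Decomposing $s = \langle \tilde s \mid \hat g\rangle$ for the true goal hidden in $s$, this pointwise inequality applies in particular to that goal, yielding $Q^{\pi_e}(s, a_c) \ge Q^{\pi_e}(s, a_p)$ whenever $\mathbf{T} = 1$. Therefore the integrand is non-negative, and so is the expectation.

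The main obstacle, and the only step requiring care, is the translation between the goal-agnostic condition $\mathbf{A}(\tilde s, a_c, a_p) = 1$ and the full-state inequality $Q^{\pi_e}(s, a_c) \ge Q^{\pi_e}(s, a_p)$. One must be explicit that the true goal embedded in $s$ lies in the support of the marginal used to define $\mathbf{A}$ (i.e., in $G^*$), so that a ``for all goals'' conclusion automatically covers the true one. Once this identification is made, the rest of the argument is a direct consequence of linearity of expectation and the indicator structure of $\mathbf{T}$.
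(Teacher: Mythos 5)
Your proposal is correct and follows essentially the same route as the paper's proof: expand $\pi_I$ via the indicator $\mathbf{T}$, note that $\mathbf{T}=1$ (i.e., $\mathbf{A}=1$ with $F=\mathrm{sign}$) forces $Q^{\pi_e}(\tilde s, a_c\mid\hat g)\ge Q^{\pi_e}(\tilde s, a_p\mid\hat g)$ for every goal and hence for the true goal embedded in $s$, so the intervened term never decreases the expectation. Your explicit remark that the true goal must lie in $G^*$ is the same (implicit) step the paper takes when it asserts the inequality "holds for arbitrary $\hat g$" and then specializes to $s$.
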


\begin{proof}
According to the intervention function $\mathbf{T}$ in Equation~\ref{eq:IDA}, intervention happens when the copilot advantage function $A(\tilde{s}, a_c, a_p) = 1$. If we consider $F$ to be the sign function, then according to Equation~\ref{eq:copilot_advantage}, $A(\tilde{s}, a_c, a_p) = 1$ means for all goals we will always have $\mathbf{I}(\tilde{s}, a_c | \hat{g}) > \mathbf{I}(\tilde{s}, a_p | \hat{g}), \forall \hat{g}$.
Recall $
    \mathbf{I}(\tilde{s_t}, \tilde{a_t} | \hat{g}) = Q^{\pi_e} (\tilde{s}_{t}, \tilde{a}_t | \hat{g}) 
$, when intervention happens, we will have:
\begin{equation}
Q^{\pi_e} (\tilde{s}_{t}, a_c | \hat{g}) 
>
Q^{\pi_e} (\tilde{s}_{t}, a_p | \hat{g}) 
,
\forall \hat{g}
\end{equation}
Therefore, 
\begin{equation}
\mathbf{T}(s, a_c, a_p) Q^{\pi_e} (\tilde{s}_{t}, a_c | \hat{g}) 
\ge
\mathbf{T}(s, a_c, a_p) Q^{\pi_e} (\tilde{s}_{t}, a_p | \hat{g}) 
,
\forall \hat{g},
\end{equation}
where the equality holds when intervention does not happen and $\mathbf{T} = 0$.

Now we introduce the expectation over the behavior policy.
\begin{eqnarray}
 \mathbb{E}_{a \sim {\pi_I}(\cdot|s)} 
 Q^{\pi_e} (\tilde{s}_{t}, a | \hat{g}) 
& = &
\mathbb{E}_{a_p \sim {\pi_p}}
\mathbb{E}_{a_c \sim {\pi_c}}
\mathbf{T} 
 Q^{\pi_e} (\tilde{s}_{t}, a_c | \hat{g}) 
+
(1 - \mathbf{T}) 
Q^{\pi_e} (\tilde{s}_{t}, a_p | \hat{g}) 
\\
& \ge &
\mathbb{E}_{a_p \sim {\pi_p}} 
 Q^{\pi_e} (\tilde{s}_{t}, a_p | \hat{g}) 
, \forall \hat{g}
\end{eqnarray}
The above equation holds for arbitrary $\hat{g}$. Therefore $ \mathbb{E}_{a \sim {\pi_I}(\cdot|s)}
 Q^{\pi_e} (s, a) 
\ge
\mathbb{E}_{a_p \sim {\pi_p}} 
Q^{\pi_e} (s, a_p) $.
\end{proof}

Similar to Lemma~\ref{appendix:lemma-for-intervention}, we have:
\begin{lemma}
\label{appendix:lemma-for-intervention-reverse}
The Q value of the behavior action under the expert's Q estimate is greater or equal to the Q value of the copilot action.
\begin{equation}
 \mathbb{E}_{a \sim {\pi_I}(\cdot|s)} Q^{\pi_e}(s, a)
 \ge
\mathbb{E}_{a_c \sim {\pi_c}} 
Q^{\pi_e} (s, a_c) 
\end{equation}
\end{lemma}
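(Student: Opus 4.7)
The plan is to mirror the proof of Lemma~\ref{appendix:lemma-for-intervention} with the roles of $a_p$ and $a_c$ swapped. I would start by decomposing the behavior-policy expectation as
\begin{equation*}
\mathbb{E}_{a \sim \pi_I(\cdot \mid s)} Q^{\pi_e}(s,a) = \mathbb{E}_{a_p \sim \pi_p,\, a_c \sim \pi_c}\bigl[\mathbf{T}(s,a_p,a_c)\,Q^{\pi_e}(s,a_c) + (1 - \mathbf{T}(s,a_p,a_c))\,Q^{\pi_e}(s,a_p)\bigr],
\end{equation*}
so that the lemma reduces to the pointwise inequality $(1 - \mathbf{T})\bigl[Q^{\pi_e}(s,a_p) - Q^{\pi_e}(s,a_c)\bigr] \ge 0$ before averaging over $(a_p, a_c)$.

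The intervention case $\mathbf{T} = 1$ is immediate, since $(1 - \mathbf{T})$ kills the bracket. For $\mathbf{T} = 0$, my plan is to leverage the goal-agnostic machinery the authors set up in Equations~\ref{eq:intervention_score} and~\ref{eq:copilot_advantage}: with $F = \operatorname{sign}$, $\mathbf{T} = 0$ corresponds to $A(\tilde{s}, a_c, a_p) \neq 1$, i.e., there is at least one goal on which the copilot does not strictly dominate the pilot. Combined with the fact that the expert $Q$-values are evaluated on the partial state $\tilde{s}$, I would try to lift this existence statement into the per-goal inequality $Q^{\pi_e}(\tilde{s}, a_p \mid \hat{g}) \ge Q^{\pi_e}(\tilde{s}, a_c \mid \hat{g})$ for every $\hat{g}$, mirroring the ``for all $\hat{g}$'' step used in the proof of Lemma~\ref{appendix:lemma-for-intervention}.

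The main obstacle is precisely this lifting step: the intervention rule in Equation~\ref{eq:IDA} is one-sided, so $\mathbf{T} = 0$ only yields pilot dominance on at least one goal, not on all goals, whereas the symmetric argument would require symmetric triggering. The two natural ways I see to close the gap are (i) strengthen the intervention to a two-sided variant that also plays $a_p$ whenever $A(\tilde{s}, a_c, a_p) = -1$, under which the per-sample inequality follows by exactly the same derivation as Lemma~\ref{appendix:lemma-for-intervention} with swapped roles; or (ii) restrict attention to the low-performing-pilot regime of Theorem~\ref{theorem:main-theorem}, where $Q^{\pi_e}(s, a_p)$ sits near $\min_a Q^{\pi_e}(s, a)$, so that $A(\tilde{s}, a_c, a_p) = 1$ holds almost surely on the state distribution induced by $\pi_I$ and the $\mathbf{T} = 0$ event carries vanishing mass. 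Either route lets the per-sample bound propagate through the outer expectations over $a_p \sim \pi_p$ and $a_c \sim \pi_c$ to yield $\mathbb{E}_{a \sim \pi_I(\cdot \mid s)} Q^{\pi_e}(s,a) \ge \mathbb{E}_{a_c \sim \pi_c} Q^{\pi_e}(s,a_c)$, which is the claim.
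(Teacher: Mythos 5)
Your decomposition of $\mathbb{E}_{a\sim\pi_I(\cdot\mid s)}Q^{\pi_e}(s,a)$ into the $\mathbf{T}$-weighted mixture of $Q^{\pi_e}(s,a_c)$ and $Q^{\pi_e}(s,a_p)$ is exactly the route the paper takes, and the obstacle you flag is real: the paper's own proof of Lemma~\ref{appendix:lemma-for-intervention-reverse} simply asserts that when intervention does not happen, $Q^{\pi_e}(\tilde{s},a_c\mid\hat{g})\le Q^{\pi_e}(\tilde{s},a_p\mid\hat{g})$ for all $\hat{g}$, and then pushes this through the expectation just as you describe. But with $F=\operatorname{sign}$, the condition $\mathbf{T}=0$ in Equation~\ref{eq:IDA} means $A(\tilde{s},a_c,a_p)\neq 1$, i.e.\ there \emph{exists} at least one goal on which the copilot does not strictly dominate; it does not give pilot dominance on \emph{every} goal, and in particular not on the true goal that defines $s$. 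So the ``for all $\hat{g}$'' step in the paper is the quantifier flip you correctly refuse to make; unlike Lemma~\ref{appendix:lemma-for-intervention}, whose intervention case genuinely does give dominance on all goals, the non-intervention case here is not symmetric.

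Where your proposal falls short is that neither of your patches proves the lemma as stated. Route (i) replaces the intervention rule with a two-sided variant ($\mathbf{T}$ also forced to $0$ only when $A=-1$), which changes the algorithm being analyzed, and route (ii) only recovers the conclusion in the $\beta\to 1$ regime, which is how the copilot bound is ultimately used in the second claim of Theorem~\ref{theorem:main-theorem} but is weaker than the unconditional statement of Lemma~\ref{appendix:lemma-for-intervention-reverse}. In short: you have reproduced the paper's argument up to its critical step, correctly identified that this step does not follow from the one-sided intervention rule, and proposed sensible repairs -- but the gap you found is a gap in the paper's proof itself, not a missing idea that the paper supplies and you overlooked.
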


\begin{proof}
According to the intervention function $\mathbf{T}$ in Equation~\ref{eq:IDA}, intervention happens when the copilot advantage function $A(\tilde{s}, a_c, a_p) = 1$. If we consider $F$ to be the sign function, then according to Equation~\ref{eq:copilot_advantage}, $A(\tilde{s}, a_c, a_p) = 1$ means for all goals we will always have $\mathbf{I}(\tilde{s}, a_c | \hat{g}) > \mathbf{I}(\tilde{s}, a_p | \hat{g}), \forall \hat{g}$.
Recall $
    \mathbf{I}(\tilde{s_t}, \tilde{a_t} | \hat{g}) = Q^{\pi_e} (\tilde{s}_{t}, \tilde{a}_t | \hat{g}) 
$, \textbf{when the intervention does not happen}, we will have:
\begin{equation}
Q^{\pi_e} (\tilde{s}_{t}, a_c | \hat{g}) 
\le
Q^{\pi_e} (\tilde{s}_{t}, a_p | \hat{g}) 
,
\forall \hat{g}
\end{equation}
Therefore, 
\begin{equation}
(1-\mathbf{T}(s, a_c, a_p)) Q^{\pi_e} (\tilde{s}_{t}, a_p | \hat{g}) 
\ge
(1-\mathbf{T}(s, a_c, a_p)) Q^{\pi_e} (\tilde{s}_{t}, a_c | \hat{g}) 
,
\forall \hat{g},
\end{equation}

Now we introduce the expectation over the behavior policy.
\begin{eqnarray}
 \mathbb{E}_{a \sim {\pi_I}(\cdot|s)} 
 Q^{\pi_e} (\tilde{s}_{t}, a | \hat{g}) 
& = &
\mathbb{E}_{a_p \sim {\pi_p}}
\mathbb{E}_{a_c \sim {\pi_c}}
\mathbf{T} 
 Q^{\pi_e} (\tilde{s}_{t}, a_c | \hat{g}) 
+
(1 - \mathbf{T}) 
Q^{\pi_e} (\tilde{s}_{t}, a_p | \hat{g}) 
\\
& \ge &
\mathbb{E}_{a_c \sim {\pi_c}}
 Q^{\pi_e} (\tilde{s}_{t}, a_c | \hat{g}) 
, \forall \hat{g}
\end{eqnarray}
The above equation holds for arbitrary $\hat{g}$. Therefore $ \mathbb{E}_{a \sim {\pi_I}(\cdot|s)}
 Q^{\pi_e} (s, a) 
\ge
\mathbb{E}_{a_c \sim {\pi_c}} 
Q^{\pi_e} (s, a_c) $.
\end{proof}

We introduce another useful lemma.

\begin{lemma}[State Distribution Difference Bound]
\label{lemma:tightened-state-distribution-diff}
Let \(\pi\) and \(\pi'\) be two policies, and let \(\tau_{\pi}: \mathcal S\to [0, 1]\) and \(\tau_{\pi'}: \mathcal S\to [0, 1] \) be the corresponding state visitation distributions. For any state-dependent function \(f(s)\) that is bounded by \(M\) (i.e., \(\|f(s)\|_1 \leq M\) for all \(s\)), the difference in expectations of \(f(s)\) under these two distributions is bounded as follows:
\begin{equation}
\left| \mathbb{E}_{s \sim \tau_{\pi}}[f(s)] - \mathbb{E}_{s \sim \tau_{\pi'}}[f(s)] \right| \leq M \|\tau_{\pi} - \tau_{\pi'}\|_1,
\end{equation}
where 
\begin{equation}
\|\tau_{\pi} - \tau_{\pi'}\|_1 = \sum_{s} \left| \tau_{\pi}(s) - \tau_{\pi'}(s) \right|.
\end{equation}
is the total variation distance between two distributions \(\tau_{\pi}\) and \(\tau_{\pi'}\).

\end{lemma}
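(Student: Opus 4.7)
The plan is to treat the left-hand side as an inner product of a signed measure (the difference of the two state visitation distributions) against the function $f$, and to bound it via Hölder's inequality in its discrete $\ell^1$--$\ell^\infty$ form. This is a standard total-variation duality statement, and the proof is three short algebraic moves: combine, apply triangle inequality, bound pointwise.

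First I would rewrite the difference of expectations as a single sum against the signed difference of densities:
\begin{equation}
\mathbb{E}_{s\sim \tau_\pi}[f(s)] - \mathbb{E}_{s\sim \tau_{\pi'}}[f(s)] = \sum_s \tau_\pi(s) f(s) - \sum_s \tau_{\pi'}(s) f(s) = \sum_s \bigl(\tau_\pi(s) - \tau_{\pi'}(s)\bigr) f(s). \nonumber
\end{equation}
Then I would take the absolute value of both sides and push it inside the sum using the triangle inequality, giving
\begin{equation}
\left| \mathbb{E}_{s\sim \tau_\pi}[f(s)] - \mathbb{E}_{s\sim \tau_{\pi'}}[f(s)] \right| \le \sum_s \bigl|\tau_\pi(s) - \tau_{\pi'}(s)\bigr| \cdot |f(s)|. \nonumber
\end{equation}

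Finally, I would apply the uniform bound $|f(s)| \le M$ (here interpreting $\|f(s)\|_1 \le M$ as a pointwise scalar bound, extending componentwise in the vector-valued case) to factor $M$ outside the sum, yielding $M \sum_s |\tau_\pi(s) - \tau_{\pi'}(s)| = M \|\tau_\pi - \tau_{\pi'}\|_1$, which is the claim.

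There is no real obstacle here; the only subtlety is notational, namely that for continuous state spaces the sums should be replaced by integrals against a base measure and $\|\cdot\|_1$ interpreted as the corresponding $L^1$ norm, but the three steps above go through verbatim in that setting. In the downstream uses of this lemma (in Theorem~\ref{theorem:main-theorem}), the function $f$ will be a bounded quantity such as an advantage or reward, so the uniform bound $M$ will be immediate from the assumption $R \in [R_{\min}, R_{\max}]$.
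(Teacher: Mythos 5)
Your proof is correct and follows exactly the same three steps as the paper's own proof: rewrite the difference of expectations as a single sum against $\tau_\pi(s) - \tau_{\pi'}(s)$, apply the triangle inequality, and factor out the uniform bound $M$. No meaningful differences to note.
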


\begin{proof}

The expectation of \(f(s)\) under the state distribution \(\tau_{\pi}\) can be written as:
\begin{equation}
\mathbb{E}_{s \sim \tau_{\pi}}[f(s)] = \sum_{s} \tau_{\pi}(s) f(s).
\end{equation}

Similarly, for policy \(\pi'\):
\begin{equation}
\mathbb{E}_{s \sim \tau_{\pi'}}[f(s)] = \sum_{s} \tau_{\pi'}(s) f(s).
\end{equation}

The difference in these expectations is:
\begin{equation}
\left| \mathbb{E}_{s \sim \tau_{\pi}}[f(s)] - \mathbb{E}_{s \sim \tau_{\pi'}}[f(s)] \right| = \left| \sum_{s} (\tau_{\pi}(s) - \tau_{\pi'}(s)) f(s) \right|
\leq \sum_{s} \left| \tau_{\pi}(s) - \tau_{\pi'}(s) \right| \left| f(s) \right|.
\end{equation}

Given that \(\|f(s)\|_1 \leq M\), we have:
\begin{equation}
\sum_{s} \left| \tau_{\pi}(s) - \tau_{\pi'}(s) \right| \left| f(s) \right| \leq M \sum_{s} \left| \tau_{\pi}(s) - \tau_{\pi'}(s) \right| = M \|\tau_{\pi} - \tau_{\pi'}\|_1.
\end{equation}

Thus, combining these bounds, we have:
\begin{equation}
\left| \mathbb{E}_{s \sim \tau_{\pi}}[f(s)] - \mathbb{E}_{s \sim \tau_{\pi'}}[f(s)] \right| \leq M \|\tau_{\pi} - \tau_{\pi'}\|_1.
\end{equation}

This completes the proof.
\end{proof}

\subsection{The relationship between the behavior policy and the pilot policy}
\label{qppendix:a-1}

In this section, we will lower bound the performance of the IDA policy $J(\pi_I)$ by the performance of the pilot policy $J(\pi_p)$ under certain conditions.

\begin{theorem}[]
\label{lemma:pilot-lower-bound}
Let $J(\pi) = \mathbb{E}_{s_0 \sim d_0, a_t \sim \pi(\cdot \mid s_t), s_{t+1} \sim P(\cdot \mid s_t, a_t)}[\sum_{t=0}^\infty \gamma^t r(s_t,a_t)]$  be the expected discounted return of following a policy $\pi$. Then for any behavior policy $\pi_I$ deduced by a copilot policy $\pi_c$, a pilot policy $\pi_p$, and an intervention function $\mathbf{T}(s, a_p, a_c)$, 
\begin{equation}
J(\pi_I) \geq J(\pi_p) - 
\beta R \left[ \cfrac{\gamma}{1 - \gamma} \right]^2
\mathbb{E}_{s \sim d_{\pi_I}} \left\|\pi_c(\cdot \mid s)- \pi_p(\cdot \mid s)\right\|_1,
\end{equation}
wherein $R = R_{\text{max}} - R_{\text{min}}$ is the range of the reward, $\beta=\frac{\mathbb{E}_{s \sim \tau_{\pi_{I}}}\left\|\mathbf{T}(\pi_{c}(\cdot \mid s)-\pi_p(\cdot \mid s))\right\|_{1}}{\mathbb{E}_{s \sim \tau_{\pi_{I}}}\left\|\pi_{c}(\cdot \mid s)-\pi_p(\cdot \mid s)\right\|_{1}}$ is the \textit{weighted expected intervention rate}, and 
the $L_1$-norm of output difference $\|\pi_c(\cdot|s)-\pi_p(\cdot|s)\|_1=\int_\mathcal{A}\left|\pi_c(a|s)-\pi_p(a|s)\right|\mathrm{d}a$ is the discrepancy between $\pi_c$ and $\pi_p$ on state $s$.
\end{theorem}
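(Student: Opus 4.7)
The plan is to apply the Policy Difference Lemma (Lemma~\ref{appendix:policy-diff-lemma}) to express $J(\pi_I) - J(\pi_p)$ as an expected advantage, then exploit the algebraic identity $\pi_I - \pi_p = \mathbf{T}(\pi_c - \pi_p)$ together with H\"older's inequality to introduce $\beta$. The outer $1/(1-\gamma)$ factor in the bound will come from normalizing the discounted state occupancy, and the inner $\gamma/(1-\gamma)$ factor from a careful bound on the advantage function.

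First I would invoke the Policy Difference Lemma with $\pi = \pi_I$ and $\pi' = \pi_p$, rewriting the trajectory expectation via the normalized discounted state visitation $d_{\pi_I}$ to get $J(\pi_I) - J(\pi_p) = \frac{1}{1-\gamma}\mathbb{E}_{s \sim d_{\pi_I}, a \sim \pi_I}[A^{\pi_p}(s,a)]$. Using the identity $\mathbb{E}_{a \sim \pi_p}[A^{\pi_p}(s,a)] = 0$, the inner expectation becomes $\sum_a(\pi_I(a|s) - \pi_p(a|s))\,A^{\pi_p}(s,a)$, and by the construction of the intervention policy, $\pi_I(a|s) - \pi_p(a|s) = \mathbf{T}(s,\cdot,\cdot)(\pi_c(a|s) - \pi_p(a|s))$, which is exactly the quantity that the weighted intervention rate $\beta$ is designed to control.

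Next I would apply H\"older's inequality state-wise to bound the inner sum by $\|A^{\pi_p}(s,\cdot)\|_\infty \cdot \|\mathbf{T}(\pi_c(\cdot|s) - \pi_p(\cdot|s))\|_1$. To obtain the sharp advantage bound, I would expand $A^{\pi_p}$ through its Bellman form, writing $A^{\pi_p}(s,a) = (r(s,a) - \bar r_{\pi_p}(s)) + \gamma(\mathbb{E}_{s'|s,a}V^{\pi_p}(s') - \bar V_{\pi_p}(s))$, so that the state-dependent constant terms cancel against the zero-sum property $\sum_a(\pi_I - \pi_p)(a|s) = 0$. Combined with $\|V^{\pi_p}\|_{\mathrm{range}} \leq R/(1-\gamma)$, this leaves only the $\gamma$-discounted value-difference contribution at scale $\gamma R/(1-\gamma)$. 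Finally, taking expectation over $s \sim d_{\pi_I}$ and using the definition of $\beta$ to substitute $\mathbb{E}_{s \sim d_{\pi_I}}\|\mathbf{T}(\pi_c - \pi_p)\|_1 = \beta \,\mathbb{E}_{s \sim d_{\pi_I}}\|\pi_c - \pi_p\|_1$ chains the pieces into $|J(\pi_I) - J(\pi_p)| \leq \beta R[\gamma/(1-\gamma)]^2 \,\mathbb{E}_{s\sim d_{\pi_I}}\|\pi_c - \pi_p\|_1$, and keeping the lower-bound side yields the claim.

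The main obstacle will be squeezing the extra factor of $\gamma$ out of the advantage bound. A naive use of $\|A^{\pi_p}\|_\infty \leq R/(1-\gamma)$ yields only the looser $[1/(1-\gamma)]^2$ dependence; recovering $[\gamma/(1-\gamma)]^2$ requires the Bellman decomposition above in which the immediate-reward difference is absorbed by centering so that only the $\gamma$-discounted value-difference component survives. A secondary subtlety is handling the fact that $\mathbf{T}(s, a_p, a_c)$ depends on both sampled actions rather than being a pure state-indicator, but this is absorbed into the definition of $\beta$ exactly as in Lemma~\ref{lemma:discrepency-between-piI-pic}.
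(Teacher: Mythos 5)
Your overall route is genuinely different from the paper's: you compare $\pi_I$ directly to $\pi_p$ via the policy difference lemma with $A^{\pi_p}$, use $\pi_I-\pi_p=\mathbf{T}(\pi_c-\pi_p)$ and H\"older, and never touch the expert policy or a state-distribution comparison (the paper instead compares both $\pi_I$ and $\pi_p$ to $\pi_e$ via Lemma~\ref{appendix:lemma-for-intervention}, then repairs the $\tau_{\pi_I}$-vs-$\tau_{\pi_p}$ mismatch with Lemmas~\ref{lemma:tightened-state-distribution-diff} and~\ref{lemma:discrepency-between-piI-pic}). That skeleton is fine at the paper's level of rigor and would deliver a bound of the same form, but it cannot deliver the stated constant, and the step you lean on to get there is incorrect. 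Centering the advantage removes only the $a$-independent baselines $\bar r_{\pi_p}(s)$ and $V^{\pi_p}(s)$; the immediate reward $r(s,a)$ depends on $a$ and does \emph{not} cancel against the zero-sum property of $\pi_I-\pi_p$. The best state-wise bound from your Bellman decomposition is therefore $\mathrm{range}_a\,A^{\pi_p}(s,\cdot)\le R+\gamma\,\mathrm{range}(V^{\pi_p})\le R+\frac{\gamma R}{1-\gamma}=\frac{R}{1-\gamma}$, not $\frac{\gamma R}{1-\gamma}$ (the latter only holds if rewards are action-independent). Moreover, even granting your claimed advantage bound, your own accounting multiplies $\frac{1}{1-\gamma}$ (occupancy normalization) by $\frac{\gamma R}{1-\gamma}$ (advantage) to give $\frac{\gamma R}{(1-\gamma)^2}$, yet your final line asserts $R\left[\frac{\gamma}{1-\gamma}\right]^2=\frac{\gamma^2 R}{(1-\gamma)^2}$; the upgrade from $\gamma$ to $\gamma^2$ is never justified. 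So as written you prove at best $J(\pi_I)\ge J(\pi_p)-\beta\frac{R}{(1-\gamma)^2}\,\mathbb{E}_{s\sim d_{\pi_I}}\|\pi_c(\cdot|s)-\pi_p(\cdot|s)\|_1$, which is weaker than the theorem by a factor of $\gamma^2$.

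The missing factor has a structural source in the paper's argument that your route discards: one $\frac{\gamma}{1-\gamma}$ comes from bounding $f(s)=\sum_t\gamma^t\mathbb{E}_{a_p\sim\pi_p}A^{\pi_e}(s,a_p)$ by $M=\frac{\gamma R}{1-\gamma}$, and the second comes from Lemma~\ref{lemma:discrepency-between-piI-pic} (Theorem 3.2 of Xue et al.), where the state-visitation discrepancy $\|\tau_{\pi_p}-\tau_{\pi_I}\|_1$ carries a $\frac{\beta\gamma}{1-\gamma}$ factor because a policy deviation only influences the state distribution from the next step onward. Since your single-step H\"older argument never compares state distributions, that second $\gamma$ has nowhere to come from; to recover the stated constant you would need to reinstate some occupancy-discrepancy step (as the paper does) or otherwise exploit the one-step-delay structure, not just sharpen the advantage bound. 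A secondary remark: your handling of $\mathbf{T}$ depending on the sampled pair $(a_p,a_c)$ by folding it into $\beta$ is acceptable, since the paper's own Lemma~\ref{lemma:discrepency-between-piI-pic} makes exactly the same move.
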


\begin{proof}

By following \cref{appendix:policy-diff-lemma}, we have
\begin{eqnarray}
    && J(\pi_I) - J(\pi_e) 
    \label{eq:jpi-jpie}
    \\
    & = & \mathbb{E}_{s, a \sim \tau_{\pi_I}} \left[ \sum_{t=0}^{\infty} \gamma^t A^{\pi_e}(s, a) \right] \\ 
    & = & \mathbb{E}_{s, a \sim \tau_{\pi_I}} \left[ \sum_{t=0}^{\infty} \gamma^t \left( Q^{\pi_e}(s, a) - V^{\pi_e}(s) \right) \right] \\
    & = & \mathbb{E}_{s \sim \tau_{\pi_I}} \left[ \sum_{t=0}^{\infty} \gamma^t \left( \mathbb{E}_{a \sim {\pi_I}(\cdot|s)} Q^{\pi_e}(s, a) - V^{\pi_e}(s) \right) \right] 
    \label{appendix:eq-proof-1}
\end{eqnarray}

By following \cref{appendix:lemma-for-intervention}, we continue to have:
\begin{eqnarray}
    &  & \mathbb{E}_{s \sim \tau_{\pi_I}} \left[ \sum_{t=0}^{\infty} \gamma^t \left( \mathbb{E}_{a \sim {\pi_I}(\cdot|s)} Q^{\pi_e}(s, a) - V^{\pi_e}(s) \right) \right] \\
    & \ge & \mathbb{E}_{s \sim \tau_{\pi_I}} \left[ \sum_{t=0}^{\infty} \gamma^t \left( 
        \mathbb{E}_{a_p \sim {\pi_p(\cdot|s)}} Q^{\pi_e}(s, a_p)
        - V^{\pi_e}(s) \right) \right] \\
    & = & \mathbb{E}_{s \sim \tau_{\pi_I}} \left[ \sum_{t=0}^{\infty} \gamma^t \left( \mathbb{E}_{a_p \sim {\pi_p}(\cdot|s)} A^{\pi_e}(s, a_p) \right) \right]
    \label{appendix:eq-3}
\end{eqnarray}

At the same time, we have
\begin{equation}
\label{eq:jp-je}
     J(\pi_p) - J(\pi_e) 
     =  \mathbb{E}_{s \sim \tau_{\pi_p}} \left[ \sum_{t=0}^{\infty} \gamma^t \left(
    \mathbb{E}_{a_p\sim\pi_p(\cdot|s)}A^{\pi_e}(s, a_p) 
    \right)
    \right] 
\end{equation}

We want to relate Equation~\ref{appendix:eq-3} with Equation~\ref{eq:jp-je}, which has the expectation over state distribution $\tau_{\pi_p}$, instead of $\tau_{\pi_I}$.

To solve the mismatch, we apply Lemma~\ref{lemma:tightened-state-distribution-diff}. Here, we let
\begin{equation}
f(s) = \sum_{t=0}^{\infty} \gamma^t \mathbb{E}_{a_p \sim \pi_p(\cdot|s)} A^{\pi_e}(s, a_p).
\end{equation}
We have $
\|f(s)\| \leq \cfrac{\gamma}{1 - \gamma} R = M$, where $R = R_{\text{max}} - R_{\text{min}}$ is the range of the reward.
Lemma~\ref{lemma:tightened-state-distribution-diff} bounds the difference in expectations of \(f(s)\) under the state distributions \(\tau_{\pi_p}\) and \(\tau_{\pi_I}\) as follows:
\begin{equation}
\left| \mathbb{E}_{s \sim \tau_{\pi_p}}[f(s)] - \mathbb{E}_{s \sim \tau_{\pi_I}}[f(s)] \right| \leq M \|\tau_{\pi_p} - \tau_{\pi_I}\|_1
\end{equation}

Substituting \(f(s)\) into this inequality, we get:
\begin{equation}
\left| \mathbb{E}_{s \sim \tau_{\pi_p}} \left[ \sum_{t=0}^{\infty} \gamma^t \mathbb{E}_{a_p \sim \pi_p(\cdot|s)} A^{\pi_e}(s, a_p) \right] - \mathbb{E}_{s \sim \tau_{\pi_I}} \left[ \sum_{t=0}^{\infty} \gamma^t \mathbb{E}_{a_p \sim \pi_p(\cdot|s)} A^{\pi_e}(s, a_p) \right] \right| \leq 
\epsilon,
\end{equation}
where 
$\epsilon = M \|\tau_{\pi_p} - \tau_{\pi_I}\|_1$.

Thus, we can write:
\begin{align}
\mathbb{E}_{s \sim \tau_{\pi_I}} \left[ \sum_{t=0}^{\infty} \gamma^t \mathbb{E}_{a_p \sim \pi_p(\cdot|s)} A^{\pi_e}(s, a_p) \right] 
\geq
\mathbb{E}_{s \sim \tau_{\pi_p}} \left[ \sum_{t=0}^{\infty} \gamma^t \mathbb{E}_{a_p \sim \pi_p(\cdot|s)} A^{\pi_e}(s, a_p) \right] - \epsilon,
\end{align}
From the deduction above, we have
$J(\pi_I) - J(\pi_e)
\geq
\mathbb{E}_{s \sim \tau_{\pi_I}} \left[ \sum_{t=0}^{\infty} \gamma^t \mathbb{E}_{a_p \sim \pi_p(\cdot|s)} A^{\pi_e}(s, a_p) \right] 
$
and the right hand side except $\epsilon_c$ is \(J(\pi_c) - J(\pi_e)\).

Therefore,
\begin{align}
J(\pi_I) - J(\pi_e) & \geq J(\pi_p) - J(\pi_e) - \epsilon.
\end{align}

According to the Theorem 3.2 in \citep{xue2023guarded} (Lemma~\ref{lemma:discrepency-between-piI-pic}), we have
\begin{equation}
\|\tau_{\pi_p} - \tau_{\pi_I}\|_1
\leq
\frac{\beta\gamma}{1-\gamma} \mathbb{E}_{s \sim d_{\pi_I}} \left\|\pi_c(\cdot \mid s)- \pi_p(\cdot \mid s)\right\|_1, 
\end{equation}
where $\beta=\frac{\mathbb{E}_{s \sim \tau_{\pi_{I}}}\left\|\mathbf{T}(\pi_{c}(\cdot \mid s)-\pi_p(\cdot \mid s))\right\|_{1}}{\mathbb{E}_{s \sim \tau_{\pi_{I}}}\left\|\pi_{c}(\cdot \mid s)-\pi_p(\cdot \mid s)\right\|_{1}}$ is the \textit{weighted expected intervention rate}. Therefore, the $\epsilon$ can be upper bounded by:

\begin{equation}
\epsilon = M \|\tau_{\pi_p} - \tau_{\pi_I}\|_1 \le 
\beta R \left[ \cfrac{\gamma}{1 - \gamma} \right]^2
\mathbb{E}_{s \sim d_{\pi_I}} \left\|\pi_c(\cdot \mid s)- \pi_p(\cdot \mid s)\right\|_1.
\end{equation}

Therefore, we have:
\begin{equation}
J(\pi_I) \geq J(\pi_p) - 
\beta R \left[ \cfrac{\gamma}{1 - \gamma} \right]^2
\mathbb{E}_{s \sim d_{\pi_I}} \left\|\pi_c(\cdot \mid s)- \pi_p(\cdot \mid s)\right\|_1.
\label{eq:pilot_bound}
\end{equation}

This is the lower limit for IDA in terms of the pilot.
\end{proof}

\subsection{The relationship between the behavior policy and the copilot policy}

In this section, we will lower-bound the performance of the IDA policy $J(\pi_I)$ by the performance of the copilot policy $J(\pi_c)$.

\begin{theorem}[]
\label{lemma:copilot-lower-bound}
Let $J(\pi) = \mathbb{E}_{s_0 \sim d_0, a_t \sim \pi(\cdot \mid s_t), s_{t+1} \sim P(\cdot \mid s_t, a_t)}[\sum_{t=0}^\infty \gamma^t r(s_t,a_t)]$  be the expected discounted return of following a policy $\pi$. Then for any behavior policy $\pi_I$ deduced by a copilot policy $\pi_c$, a pilot policy $\pi_p$, and an intervention function $\mathbf{T}(s, a_p, a_c)$, 
\begin{equation}
J(\pi_I) \geq J(\pi_c) - 
(1-\beta) R \left[ \cfrac{\gamma}{1 - \gamma} \right]^2
\mathbb{E}_{s \sim d_{\pi_I}} \left\|\pi_c(\cdot \mid s)- \pi_p(\cdot \mid s)\right\|_1.
\end{equation}
\end{theorem}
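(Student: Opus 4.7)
The plan is to mirror the argument used to establish the pilot lower bound in Theorem~\ref{lemma:pilot-lower-bound}, with the copilot now playing the role of the reference policy. The structural symmetry between Lemma~\ref{appendix:lemma-for-intervention} and Lemma~\ref{appendix:lemma-for-intervention-reverse} suggests the derivation should go through analogously, and the factor $(1-\beta)$ in the final expression arises naturally from Lemma~\ref{lemma:discrepency-between-piI-pic}, which already bounds $\|\tau_{\pi_I} - \tau_{\pi_c}\|_1$ using exactly this coefficient (in contrast to the $\beta$ coefficient that governs $\|\tau_{\pi_I} - \tau_{\pi_p}\|_1$ in the pilot proof).

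First, I would apply the policy difference lemma (Lemma~\ref{appendix:policy-diff-lemma}) to $J(\pi_I) - J(\pi_e)$ and rewrite it as
\begin{equation}
J(\pi_I) - J(\pi_e) = \mathbb{E}_{s \sim \tau_{\pi_I}}\left[\sum_{t=0}^\infty \gamma^t \left(\mathbb{E}_{a \sim \pi_I(\cdot|s)} Q^{\pi_e}(s,a) - V^{\pi_e}(s)\right)\right].
\nonumber
\end{equation}
I would then invoke Lemma~\ref{appendix:lemma-for-intervention-reverse} to replace the inner expectation over $\pi_I$ with one over $\pi_c$ at the cost of a lower bound, so that the right-hand side is at least $\mathbb{E}_{s \sim \tau_{\pi_I}}[\sum_{t=0}^\infty \gamma^t \mathbb{E}_{a_c \sim \pi_c(\cdot|s)} A^{\pi_e}(s,a_c)]$. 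Applying the policy difference lemma once more, the analogous expression taken over $\tau_{\pi_c}$ is exactly $J(\pi_c) - J(\pi_e)$, so the remaining work is to bound the state-distribution mismatch between $\tau_{\pi_I}$ and $\tau_{\pi_c}$.

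Next, I would apply Lemma~\ref{lemma:tightened-state-distribution-diff} with $f(s) = \sum_{t=0}^\infty \gamma^t \mathbb{E}_{a_c \sim \pi_c(\cdot|s)} A^{\pi_e}(s,a_c)$, which is bounded by $M = \frac{\gamma}{1-\gamma} R$ (the same constant as in the pilot proof), obtaining an error of at most $M \|\tau_{\pi_I} - \tau_{\pi_c}\|_1$. Finally, substituting Lemma~\ref{lemma:discrepency-between-piI-pic} in the form $\|\tau_{\pi_I} - \tau_{\pi_c}\|_1 \le \frac{(1-\beta)\gamma}{1-\gamma}\mathbb{E}_{s \sim \tau_{\pi_I}}\|\pi_c(\cdot|s) - \pi_p(\cdot|s)\|_1$ contributes the $(1-\beta)\gamma/(1-\gamma)$ factor, and multiplying by $M$ produces the claimed $[\gamma/(1-\gamma)]^2$ prefactor together with the $R$ and $(1-\beta)$ constants.

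The main obstacle is chiefly bookkeeping rather than conceptual, but one subtle point requires care: Lemma~\ref{appendix:lemma-for-intervention-reverse} flips the direction of the inequality from Lemma~\ref{appendix:lemma-for-intervention}, and its validity relies on the observation that \emph{non-intervention} ($\mathbf{T}=0$) is precisely the event in which the copilot action is no better than the pilot action under every possible goal. One must also verify that the complementary coefficient $(1-\beta)$ in Lemma~\ref{lemma:discrepency-between-piI-pic} attaches to exactly the discrepancy $\|\tau_{\pi_I}-\tau_{\pi_c}\|_1$ (as opposed to $\|\tau_{\pi_I}-\tau_{\pi_p}\|_1$); once this is confirmed, the rest of the argument is a near-verbatim reprise of the pilot lower bound proof.
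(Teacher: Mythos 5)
Your proposal is correct and follows essentially the same route as the paper's proof: the policy difference lemma applied to $J(\pi_I)-J(\pi_e)$, the reverse intervention lemma (Lemma~\ref{appendix:lemma-for-intervention-reverse}) to pass from $\pi_I$ to $\pi_c$ inside the expectation, the bounded-function state-distribution lemma (Lemma~\ref{lemma:tightened-state-distribution-diff}) with $M=\frac{\gamma}{1-\gamma}R$, and Lemma~\ref{lemma:discrepency-between-piI-pic} supplying the $(1-\beta)$ coefficient for $\|\tau_{\pi_I}-\tau_{\pi_c}\|_1$. The subtle points you flag (the inequality flip under non-intervention and the attachment of $(1-\beta)$ to the copilot discrepancy) are exactly how the paper handles them.
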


\begin{proof}

According to Lemma~\ref{appendix:lemma-for-intervention-reverse}, 
\begin{align}
& J(\pi_I) - J(\pi_e) \\
=& \mathbb{E}_{s, a \sim \pi_I} \bigg[ \sum_{t=0}^{\infty} \gamma^t A^{\pi_e}(s, a) \bigg] \\
 = &\mathbb{E}_{s, a \sim \pi_I} \bigg[ \sum_{t=0}^{\infty} \gamma^t Q^{\pi_e}(s, a) - V^{\pi_e}(s)\bigg]  \\
\geq & \mathbb{E}_{s \sim \pi_I} \bigg[ \mathbb{E}_{a_c \sim \pi_c(\cdot | s)} \sum_{t=0}^{\infty} \gamma^t \big[ Q^{\pi_e}(s, a_c) - V^{\pi_e}(s)\big] \bigg]
 \label{equation:spii-apic-qpic-vs}
\end{align}

With Lemma~\ref{appendix:policy-diff-lemma}, we have
\begin{equation}
     J(\pi_c) - J(\pi_e) 
     =  \mathbb{E}_{s \sim \tau_{\pi_c}} \left[ \sum_{t=0}^{\infty} \gamma^t \left(
    \mathbb{E}_{a_c\sim\pi_c(\cdot|s)}A^{\pi_e}(s, a_c) 
    \right)
    \right] 
    \label{equation:jpic-jpie}
\end{equation}

Now we will use Lemma~\ref{lemma:tightened-state-distribution-diff}, letting
\begin{equation}
f_c(s) = \sum_{t=0}^{\infty} \gamma^t \mathbb{E}_{a_c \sim \pi_c(\cdot|s)} A^{\pi_e}(s, a_c).
\end{equation}
We will have $
\|f_c(s)\| \leq \cfrac{\gamma}{1 - \gamma} R = M$, where $R = R_{\text{max}} - R_{\text{min}}$ is the range of the reward.
Lemma~\ref{lemma:tightened-state-distribution-diff} bounds the difference in expectations of \(f(s)\) under the state distributions \(\tau_{\pi_c}\) and \(\tau_{\pi_I}\) as follows:
\begin{equation}
\left| \mathbb{E}_{s \sim \tau_{\pi_c}}[f_c(s)] - \mathbb{E}_{s \sim \tau_{\pi_I}}[f_c(s)] \right| \leq M \|\tau_{\pi_c} - \tau_{\pi_I}\|_1
\end{equation}

Substituting \(f(s)\) into this inequality, we get:
\begin{align}
\left| \mathbb{E}_{s \sim \tau_{\pi_c}} \left[ \sum_{t=0}^{\infty} \gamma^t \mathbb{E}_{a_c \sim \pi_c(\cdot|s)} A^{\pi_e}(s, a_c) \right] - \mathbb{E}_{s \sim \tau_{\pi_I}} \left[ \sum_{t=0}^{\infty} \gamma^t \mathbb{E}_{a_c \sim \pi_c(\cdot|s)} A^{\pi_e}(s, a_c) \right] \right| \leq 
\epsilon_c,
\end{align}
where 
$
\epsilon_c = M \|\tau_{\pi_c} - \tau_{\pi_I}\|_1
$
.

Thus, we can write:
\begin{align}
\mathbb{E}_{s \sim \tau_{\pi_I}} \left[ \sum_{t=0}^{\infty} \gamma^t \mathbb{E}_{a_c \sim \pi_c(\cdot|s)} A^{\pi_e}(s, a_c) \right]
\geq
\mathbb{E}_{s \sim \tau_{\pi_c}} \left[ \sum_{t=0}^{\infty} \gamma^t \mathbb{E}_{a_c \sim \pi_c(\cdot|s)} A^{\pi_e}(s, a_c) \right] 
- \epsilon_c
.
\label{equation:inequality-in-qppendix-a2}
\end{align}

By substituting Equation~\ref{equation:spii-apic-qpic-vs} and Equation~\ref{equation:jpic-jpie} into Equation~\ref{equation:inequality-in-qppendix-a2}, we  have:
\begin{align}
J(\pi_I) - J(\pi_e) & \geq J(\pi_c) - J(\pi_e) - \epsilon_c.
\end{align}

By using Lemma~\ref{lemma:discrepency-between-piI-pic}, we have $\epsilon_c  = M \|\tau_{\pi_c} - \tau_{\pi_I}\|_1$ upper-bounded by:
\begin{equation}
\epsilon_c
\le 
(1 - \beta) R \left[ \cfrac{\gamma}{1 - \gamma} \right]^2
\mathbb{E}_{s \sim d_{\pi_I}} \left\|\pi_c(\cdot \mid s)- \pi_p(\cdot \mid s)\right\|_1.
\end{equation}

Therefore, we have:
\begin{equation}
J(\pi_I) \geq J(\pi_c) - 
(1-\beta) R \left[ \cfrac{\gamma}{1 - \gamma} \right]^2
\mathbb{E}_{s \sim d_{\pi_I}} \left\|\pi_c(\cdot \mid s)- \pi_p(\cdot \mid s)\right\|_1.
\label{eq:copilot_bound}
\end{equation}
This is the lower bound of IA in terms of the copilot.

\end{proof}

\subsection{Safety and Performance Guarantees of IA}
\label{appendix:a-3}

We now prove our main theorem on the performance guarantees of following the IA policy, restated below:
\setcounter{theorem}{0}
\begin{theorem}[]
Let $J(\pi) = \mathbf{E}_{s_0 \sim d_0, a_t \sim \pi(\cdot \mid s_t), s_{t+1} \sim P(\cdot \mid s_t, a_t)}[\sum_{t=0}^\infty \gamma^t r(s_t,a_t)]$  be the expected discounted return of following a policy $\pi$.
Then, the performance following the Interventional Assistance policy (or behavior policy) $\pi_I$ has the following guarantees:
\begin{enumerate}
    \item 
    For a near-optimal pilot, $(Q^{\pi_e}(s, a_p) \approx \max_{a^*} Q^{\pi_e}(s,a^*))$, $\pi_I$ is lower bounded by $\pi_p$:
    \begin{equation}
        J(\pi_{I}) \geq J(\pi_{p}).
        \nonumber
    \end{equation}
    \item 
    For a low performing pilot, $(Q^{\pi_e}(s, a_p) \approx \min_{a} Q^{\pi_e}(s,a))$, $\pi_I$ is low bounded by $\pi_c$:
    \begin{equation}
        J(\pi_{I}) \geq J(\pi_{c}).
        \nonumber
    \end{equation}
\end{enumerate}
\end{theorem}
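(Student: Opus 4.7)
My plan is to derive Theorem 1 as a direct corollary of two more refined lower bounds that I would prove first: one comparing $J(\pi_I)$ to $J(\pi_p)$, and one comparing $J(\pi_I)$ to $J(\pi_c)$. Both bounds will contain a slack term controlled by the weighted intervention rate
\begin{equation}
\beta = \frac{\mathbb{E}_{s\sim\tau_{\pi_I}}\|\mathbf{T}(\pi_c(\cdot\mid s)-\pi_p(\cdot\mid s))\|_1}{\mathbb{E}_{s\sim\tau_{\pi_I}}\|\pi_c(\cdot\mid s)-\pi_p(\cdot\mid s)\|_1}. \nonumber
\end{equation}
The theorem then follows by arguing that $\beta \to 0$ when the pilot is near-optimal, and $\beta \to 1$ when the pilot is very poor, and invoking the appropriate bound.

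\textbf{Step 1: The two refined bounds.} I would first prove that
\begin{equation}
J(\pi_I) \geq J(\pi_p) - \beta R \left[\tfrac{\gamma}{1-\gamma}\right]^2 \mathbb{E}_{s\sim\tau_{\pi_I}}\|\pi_c(\cdot\mid s)-\pi_p(\cdot\mid s)\|_1, \nonumber
\end{equation}
and symmetrically,
\begin{equation}
J(\pi_I) \geq J(\pi_c) - (1-\beta) R \left[\tfrac{\gamma}{1-\gamma}\right]^2 \mathbb{E}_{s\sim\tau_{\pi_I}}\|\pi_c(\cdot\mid s)-\pi_p(\cdot\mid s)\|_1. \nonumber
\end{equation}
The strategy for each is the same: apply the policy difference lemma to $J(\pi_I)-J(\pi_e)$, then use the fact that the intervention function $\mathbf{T}$ only fires when $Q^{\pi_e}(s,a_c) \geq Q^{\pi_e}(s,a_p)$ for every goal (by the definition of the copilot advantage with $F=\mathrm{sign}$) to conclude pointwise that $\mathbb{E}_{a\sim\pi_I(\cdot\mid s)} Q^{\pi_e}(s,a) \geq \mathbb{E}_{a_p\sim\pi_p} Q^{\pi_e}(s,a_p)$, and analogously $\geq \mathbb{E}_{a_c\sim\pi_c} Q^{\pi_e}(s,a_c)$. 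Finally I would close the state-distribution mismatch between $\tau_{\pi_I}$ and $\tau_{\pi_p}$ (resp.\ $\tau_{\pi_c}$) using a total-variation bound together with the Xue et al.\ state-distribution discrepancy result that carries the factor $\beta$ (resp.\ $1-\beta$).

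\textbf{Step 2: Taking limits in $\beta$.} For a near-optimal pilot with $Q^{\pi_e}(s,a_p) \approx \max_{a^*} Q^{\pi_e}(s,a^*)$, the pilot action already dominates the copilot action (over every goal), so the event $A(\tilde s,a_c,a_p)=1$ essentially never occurs; this forces $\mathbf{T}\equiv 0$ on the support of $\tau_{\pi_I}$ and hence $\beta \to 0$. Substituting into the first bound eliminates the slack and yields $J(\pi_I) \geq J(\pi_p)$. For a very poor pilot with $Q^{\pi_e}(s,a_p) \approx \min_a Q^{\pi_e}(s,a)$, the copilot's action dominates for every goal, so the intervention condition always fires, giving $\mathbf{T}\equiv 1$ and $\beta \to 1$. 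Substituting into the second bound gives $J(\pi_I) \geq J(\pi_c)$.

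\textbf{Main obstacle.} The technical heart of the argument is Step 1, specifically the pointwise $Q$-domination claim that lets the advantage under $\pi_I$ dominate the advantage under either $\pi_p$ or $\pi_c$. This requires carefully using that $A(\tilde s,a_c,a_p)=1$ is a \emph{for-all-goals} condition, so that the inequality survives marginalization over $\hat g$ and over the expectation in $a_p,a_c$. The state-distribution mismatch step is then a routine application of the total-variation bound combined with the Xue et al.\ result, provided one tracks the $\beta$ and $1-\beta$ factors correctly. Step 2 is essentially a limiting statement: the \emph{approximate} equalities in the hypotheses translate into the intervention event holding (or failing) almost surely under $\tau_{\pi_I}$, so some care is needed to state this rigorously, e.g.\ by phrasing the approximations so that the indicator $\mathbf{1}\{A(\tilde s,a_c,a_p)=1\}$ is zero (resp.\ one) everywhere on the support of $\tau_{\pi_I}$, after which the slack term vanishes exactly.
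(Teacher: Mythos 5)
Your plan matches the paper's proof essentially step for step: the paper likewise first establishes the two refined bounds $J(\pi_I)\geq J(\pi_p)-\beta R[\gamma/(1-\gamma)]^2\,\mathbb{E}_{s\sim\tau_{\pi_I}}\|\pi_c-\pi_p\|_1$ and $J(\pi_I)\geq J(\pi_c)-(1-\beta) R[\gamma/(1-\gamma)]^2\,\mathbb{E}_{s\sim\tau_{\pi_I}}\|\pi_c-\pi_p\|_1$ via the policy difference lemma applied to $J(\pi_I)-J(\pi_e)$, the for-all-goals $Q$-domination property of the intervention rule, a total-variation bound on the state-distribution mismatch, and the Xue et al.\ discrepancy result carrying the $\beta$ and $1-\beta$ factors, and then concludes by sending $\beta\to 0$ (near-optimal pilot, $\mathbf{T}\approx 0$) and $\beta\to 1$ (poor pilot, $\mathbf{T}\approx 1$). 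Your noted caveat about making the limiting step rigorous is fair, but the paper argues it at the same level of informality, so there is no substantive difference in approach.
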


\begin{proof}
From equation \ref{eq:pilot_bound} and equation \ref{eq:copilot_bound} we have obtained the following lower bounds on the performance of IA,

\begin{equation}
J(\pi_I) \geq J(\pi_c) - 
(1-\beta) R \left[ \cfrac{\gamma}{1 - \gamma} \right]^2
\mathbb{E}_{s \sim d_{\pi_I}} \left\|\pi_c(\cdot \mid s)- \pi_p(\cdot \mid s)\right\|_1
\end{equation}

\begin{equation}
J(\pi_I) \geq J(\pi_p) - 
\beta R \left[ \cfrac{\gamma}{1 - \gamma} \right]^2
\mathbb{E}_{s \sim d_{\pi_I}} \left\|\pi_c(\cdot \mid s)- \pi_p(\cdot \mid s)\right\|_1
\end{equation}

For a near optimal pilot, 
\begin{equation}
    Q(s,a_p) \approx \max_{a} Q^{\pi_e}(s, a)
\end{equation}
and therefore,
\begin{equation}
    Q(s, a_p) \geq Q(s, a_c).
\end{equation}
This means that $T(s, a_p, a_c) \approx 0$ and therefore $\beta \approx 0$.
When we take the limit of the pilot lower bound as $\beta\to 0$, we have:
\begin{equation}
    \lim_{\beta \to 0} \bigg[  J(\pi_I) \geq J(\pi_p) - \beta R \left[ \cfrac{\gamma}{1 - \gamma} \right]^2 \mathbb{E}_{s \sim d_{\pi_I}} \left\|\pi_c(\cdot \mid s)- \pi_p(\cdot \mid s)\right\|_1 \bigg]
\end{equation}
which simplifies to
\begin{equation}
    J(\pi_I) \geq J(\pi_p).
\end{equation}

Similarly for a pilot that is far from optimal, 
\begin{equation}
    Q(s,a_p) \approx \min_{a} Q^{\pi_e}(s, a)
\end{equation}
and therefore
\begin{equation}
    Q(s, a_p) \leq Q(s, a_c).
\end{equation}
This means that $T(s, a_p, a_c) \approx 1$ and therefore $\beta \approx 1$.
When we take the limit of the copilot lower bound as $\beta\to 1$, we have:
\begin{equation}
    \lim_{\beta \to 1} \bigg[ J(\pi_I) \geq J(\pi_c) - (1-\beta) R \left[ \cfrac{\gamma}{1 - \gamma} \right]^2 \mathbb{E}_{s \sim d_{\pi_I}} \left\|\pi_c(\cdot \mid s)- \pi_p(\cdot \mid s)\right\|_1 \bigg]
\end{equation}
which simplifies to
\begin{equation}
    J(\pi_I) \geq J(\pi_c).
\end{equation}
We have therefore shown that for a near optimal pilot, IA is guaranteed to not degrade performance, and for a very poor pilot, IA is guaranteed to do at least as good as the copilot.
\end{proof}

\section{Computational Cost of IDA}
We found that the computation time for IDA inference only slightly increase as the size of the goal space increases (Table~\ref{ida-time}).
In all our experiments, we approximate continuous goal spaces by sampling 1,000 candidate goals which costs about $3$~ms. 
This is sufficiently fast for most real-time control applications. Performance could further be improved by using Monte-Carlo estimates as done in the Faux-Goal experiments. 
\begin{table}[h!]
  \centering
  \begin{tabular}{lllllllllll}
    \toprule 
    Time & 1    & 2   & 3  & 4  & 5  & 10 & 100 & 1000 & 10000 & 100000   \\
    \midrule
    (ms)       & 2.0 & 2.1  & 2.2 & 2.3 & 2.3  & 2.0 & 2.1 & 3.1  &2.5 & 11.7    \\
    \bottomrule
  \end{tabular}
  \vspace{0.2cm}
  \caption{Computation time of Advantage Function v. Number of Goals on a single RTX 3080Ti}
  \label{ida-time}
\end{table}

\section{Examples of Real Human Intervention}
We found that, for human experiments in Lunar Lander, interventions by IDA are most common at the start of episodes and when touching down (Figure~\ref{fig:human-intervention}).
Each episode starts with a random disruptive force applied to the rocket so it makes sense that intervention should occur initially to ensure the human maintains a stable flight.
Additionally, the rocket ship is very sensitive to the speed and orientation when touching down.
Interventions near touch down likely serve to prevent collisions with the ground. 
\begin{figure}[H]
  \centering
  \includegraphics[width=8cm]{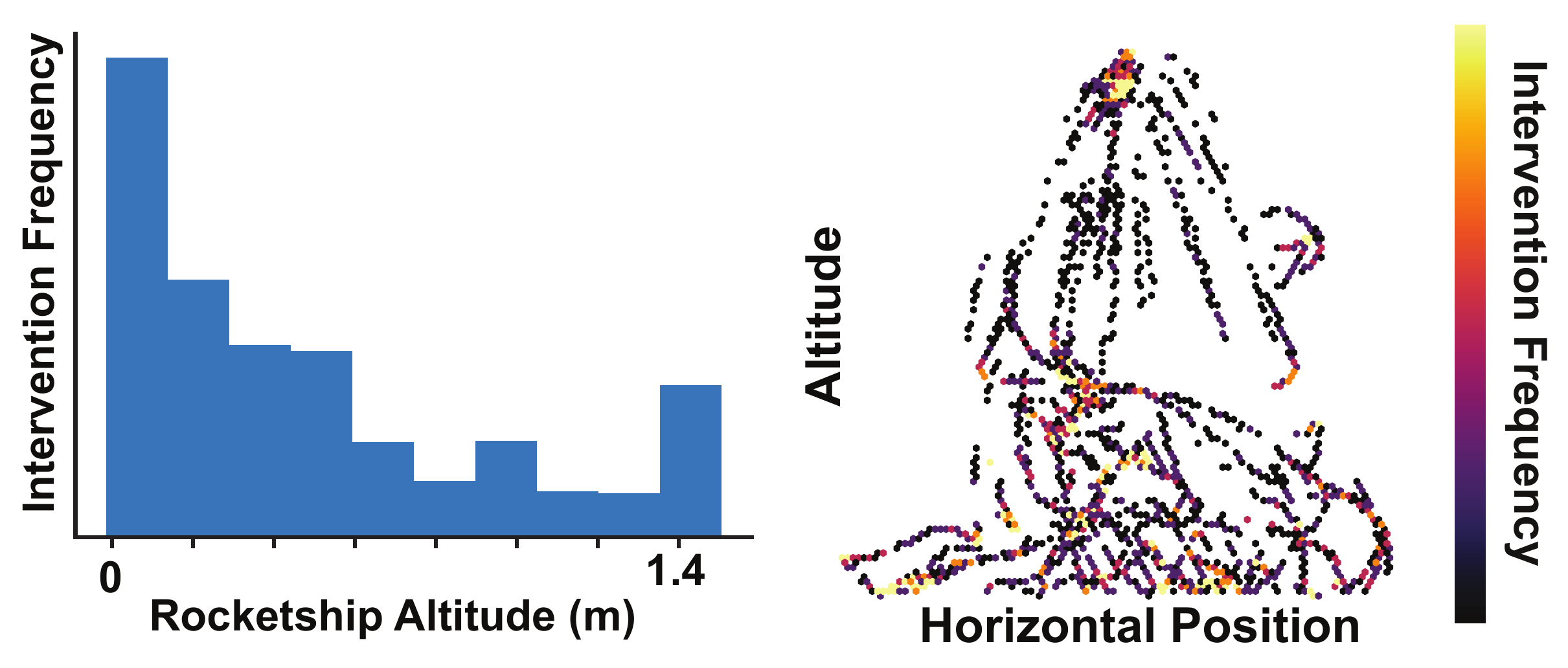}
  \caption{Intervention tends to occur near the start of trajectories to stabalize the rocket and then again near the end of trajectories to assist the touch down.}
  \label{fig:human-intervention}
\end{figure}

\section{Human-in-the-Loop Lunar Lander Experimental Design}
\label{experiment_details}
Eight ($2$ female, $6$ male) participants with no prior experience performed $3$ sequences of Lunar Lander experiments.
All participants were compensated with a gift card for their time and involvement in the study. 
Each sequence consisted of $3$ experimental blocks ($30$ episodes per block) of pilot-only, copilot, and IDA control (Figure \ref{fig:experiments}).
An episode of lunar lander timed out after 30 seconds.
In total, each participant played 270 episodes of Lunar Lander, 90 for each condition of pilot-only, copilot, and IDA.
\begin{figure}[H]
  \centering
  \includegraphics[width=8cm]{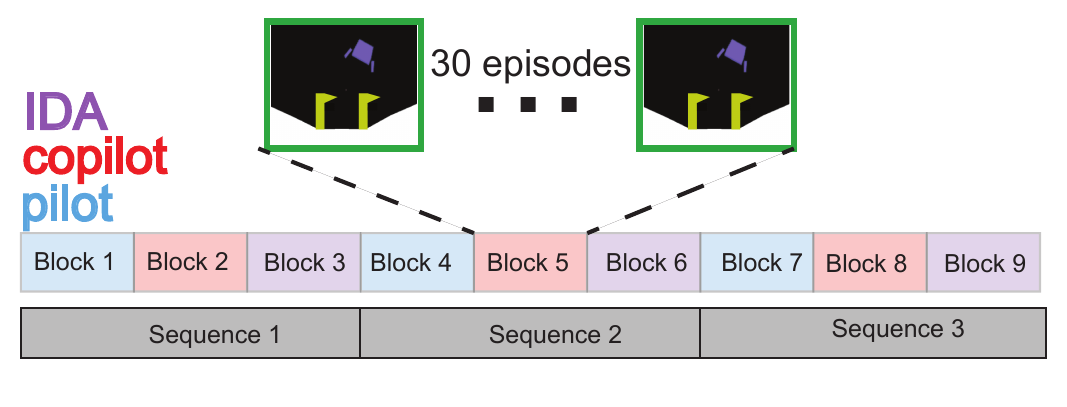}
  \vspace{-0.5cm}
  \caption{All participants completed 3 sequences of three blocks. Each sequence was composed of a pilot-only, copilot, and IDA control block. Each block consisted of 30 episodes.}
  \label{fig:experiments}
\end{figure}

\end{document}